\documentclass[journal]{IEEEtran}

\usepackage{ifpdf}

\usepackage{cite}

\ifCLASSINFOpdf

\else

\fi

\usepackage{amsmath}
\usepackage{gensymb}

\usepackage{algorithmic}

\usepackage{array}
\usepackage{multirow}

\usepackage{url}
\usepackage{xcolor}
\usepackage{comment}
\usepackage{cite}
\usepackage{amsmath, amssymb, amsfonts}
\usepackage{algorithmic}
\usepackage{graphicx}
\usepackage{textcomp}
\usepackage{bm}
\usepackage{amsthm}

\usepackage[
    colorlinks=true,
    linkcolor=cyan,
    urlcolor=cyan,
    citecolor=cyan,
    anchorcolor=cyan
]{hyperref}

\theoremstyle{definition}
\newtheorem{definition}{Definition}
\newtheorem{theorem}{Theorem}

\newtheorem{assumption}{Assumption}

\begin{document}

    \title{When Can Phasor-Domain Device Models Be Trusted for Electromechanical Stability Analysis of Grid-Forming Converter-Dominated Microgrids?}

    \author{Zhongze~Li,~Xiaoyu~Peng,~Xi~Ru,~Zhaojian~Wang,~Jianxin~Zhang,~Yingshang~Liu,~and~Feng~Liu
    }

    \maketitle

    \begin{abstract}
    Grid-forming (GFM) converter-dominated microgrids are often analyzed using reduced-order phasor-domain electromechanical GFM models, but the validity of these models is often taken for granted.
    Assuming ideal inner-loop tracking (IILT) of terminal-voltage references, these models neglect the inner-loop and filter dynamics at the electromagnetic-transient (EMT) timescale to simplify stability analysis.
    This paper argues that such neglected dynamics can destabilize the system, invalidating the stability conclusions drawn from the IILT model.
    To address this cross-timescale stability issue, we formulate the validity of the IILT stability conclusion as a robust-stability certification problem.
    The EMT-induced model mismatch between the reduced-order converter model and the actual converter model
    is represented as a structured uncertainty embedded around the IILT feedback loop. This yields a frequency-resolved interaction index and a structured singular-value sufficient certificate for determining when the stability conclusion of the IILT model can be certified with respect to a prescribed EMT uncertainty weight. The uncertainty weight can be obtained from detailed EMT models or terminal reference-response measurements. Case studies confirm that the proposed certificate correctly certifies model validity and identifies the loss of trustworthiness. We also demonstrate that the measurement-based uncertainty weights closely match the model-based ones, which enables deployment without accessing inner-loop models.
    \end{abstract}

    \begin{IEEEkeywords}
        Grid-forming converter, electromagnetic transient, model validity, robust stability, structured singular value.
    \end{IEEEkeywords}

    \IEEEpeerreviewmaketitle
    \bstctlcite{BSTcontrol}

    \section{Introduction}

    Grid-forming (GFM) converters are increasingly deployed in converter-dominated microgrids.
    In many electromechanical-timescale stability studies, they are represented by reduced-order phasor-domain device models that retain outer power-synchronization and voltage-regulation dynamics while assuming ideal or sufficiently fast reference-to-terminal-voltage dynamics
    \cite{SCHIFFER2016135,fuQoAdmittanceModeling2023a,pengCompositionalGridCode2025,xiuqiang}.
    These models are compatible with network models in polar coordinates, ranging from quasi-static power-flow models to dynamical line models  \cite{fuQoAdmittanceModeling2023a,pengCompositionalGridCode2025}. However, a key question is whether such models can be trusted for electromechanical stability analysis in multi-GFM microgrids. Although the target phenomenon occurs on the electromechanical timescale, the terminal voltage of a GFM converter is realized through EMT-scale inner loops, filters, feedforward paths, and network-dependent terminal interactions. Consequently, the neglected reference-to-terminal-voltage tracking dynamics
    may interact with the multi-converter network, invalidating the stability assessment obtained from the reduced-order model \cite{ModelTF}.

Existing studies have shown that converter-side EMT dynamics can reshape low-frequency stability. Inner voltage and current loops may introduce non-negligible phase lag and coupling effects \cite{zhaoClosedFormSolutionsGridForming2024,ravanjiImpactVoltageLoopFeedforward2023}. Low-frequency resonance mechanisms have been reported in GFM systems \cite{zhaoLowFrequencyResonancesGridForming2024}. Angle--voltage coupling and power-dynamic decoupling studies further show that terminal voltage magnitude and phase channels cannot always be treated independently \cite{pengImpactofAngle-VoltageCoupling,zhaoPowerDynamicDecoupling2022}. Impedance-based, circuit-based, and complex-torque-based methods provide complementary views of such interactions \cite{liImpedanceCircuitModel2021,zhaoExploringDampingEffecta}. These studies reveal important
cross-timescale mechanisms by which converter-side EMT dynamics affect low-frequency angle--voltage stability, but they do not directly certify when a reduced-order phasor-domain device model preserves the correct stability assessment for multi-converter microgrids.

Model reduction and timescale separation methods provide an alternative approach. Singular-perturbation theory gives a principled reduction when fast and slow dynamics are sufficiently separated \cite{christofidesSingularPerturbationsInputtostate1996,maSingularPerturbationBasedLargeSignal2024}. Related reductions have been developed for microgrids and converter-dominated systems \cite{zhaozhuoliModel_reduction,rasheduzzaman_sp_in_mg}. Reduced-order modeling and transfer-function approximation further improve tractability for larger systems \cite{ajalaModelReductionDynamic2023,eberleinImpactInnerControl2023}. However, these approaches often depend on detailed device models, operating conditions, controller parameters, network strength, or identifiable dominant states.
When timescale separation is weak or device-dependent, a phasor-domain model lacks an explicit certificate that its stability conclusion is robust to the neglected EMT dynamics.

Robust control and measurement-based methods are relevant but do not fully close this gap. Structured singular-value analysis provides a standard framework for robust stability under structured uncertainty \cite{zhou1996robust}.
In converter-dominated systems, robust stability tools have been used to assess uncertainty in converter interactions and control loops \cite{RossoGFMs, JinSSV}.
However, either they require explicit state-space uncertainty descriptions and parameter bounds \cite{JinSSV}, or they rely on heuristically chosen uncertainty weights for specific control loops\cite{RossoGFMs}.
Measurement-based and black-box methods can extract terminal dynamic responses from perturbation data \cite{haberleOptimalDynamicAncillary2025,CifuentesBlack-Box}. Nevertheless, measured frequency responses alone do not provide a system-level validity certificate for a phasor-domain device model.
What is still missing is a bridge from terminal-accessible EMT information to a robustness-compatible uncertainty description that can certify whether the stability conclusion of the IILT model remains valid.

To address these issues, this paper proposes an assessment method based on robust $\mu$-analysis theory to certify the trustworthiness of phasor-domain device models for stability analysis.
The main contributions are twofold:
\begin{itemize}
    \item \textbf{Theory}: We formulate the trustworthiness of phasor-domain device models as a robust model-validity problem for GFM-dominated microgrids.
    Unlike existing studies that directly analyze detailed EMT models or compare reduced and EMT models on a case-by-case basis, the proposed method introduces the ideal-inner-loop tracking (IILT) model as the nominal device model. It embeds the EMT-induced reference-to-terminal-voltage mismatch as a structured uncertainty in the closed loop.
    Consequently, the model-validity problem is converted into a robust stability problem.
    This conversion yields a frequency-resolved interaction index and a structured singular-value certificate
    that justify when the phasor-domain device model can be trusted for stability analysis.
    Compared with an unstructured small-gain bound, the block-diagonal uncertainty structure reduces conservatism.

    \item \textbf{Practical Implementation}: We develop model-based and measurement-based procedures for constructing the per-device EMT-induced uncertainty weight $\mathbf{W}(s)$ required by the certificate.
    The construction captures angle-channel, voltage-channel, and cross-channel mismatch between voltage phasor references and terminal voltage phasors, rather than relying on scalar bandwidth or strict timescale separation.
    It therefore connects detailed EMT models or terminal reference-response measurements to a system-level model validity certificate, enabling validity assessment when converter inner-loop structures are heterogeneous, proprietary, or unavailable.
\end{itemize}

    The rest of this paper is organized as follows. Section \ref{sec: MicrogridModel} develops the microgrid model. Section \ref{sec: EMTFeedback} reformulates EMT-induced mismatch as structured uncertainty around the IILT closed loop. Section \ref{sec: RobustStability} defines the interaction index and derives a sufficient robust-stability certificate. Section \ref{sec: PracticalImplementation} presents model-based and measurement-based procedures for constructing the uncertainty weight $\mathbf{W}(s)$. Section \ref{sec: Case} reports case studies, and Section \ref{sec: Conclusion} concludes the paper.

    \textbf{Notation}:
Throughout this paper, $\odot$ denotes the entry-wise (Hadamard) product of conformable matrices, and $\mathrm{blkdiag}(\cdot)$ denotes block-diagonal matrix. The $\mathcal{H}_\infty$ norm of a transfer matrix is $\|\cdot\|_\infty$, and $\mathcal{RH}_\infty$ denotes the space of real-rational, stable, proper transfer functions. For given uncertainty set $\widetilde{\boldsymbol{\Delta}}_e$, the structured singular value is $\mu_{\widetilde{\boldsymbol{\Delta}}_e}(\cdot)$. $\mathbf{I}_k$ denotes $k$-dimensional identity matrix.
Hat ( $\hat{\cdot}$ ) denotes complex circuit variables.
Tilde ( $\widetilde{\cdot}$ ) denotes the lifted form of the corresponding entrywise quantity, and asterisk ($\cdot^*$) denotes the value at the equilibrium.

    \section{Converter-dominated Microgrid Modeling}\label{sec: MicrogridModel}
    This section develops an EMT model of a microgrid with
    $n$ GFM converters.
    We start with modeling each
    converter and assemble them into the overall interconnected feedback model.

    \subsection{GFM Modeling}

    \subsubsection{Outer Loop Dynamics}
    For the $i$-th converter, the outer loops generate the reference angle $\delta^{\text{ref}}_{i}$, frequency $\omega
    ^{\text{ref}}_{i}$, and voltage $V^{\text{ref}}_{i}$ by virtual synchronous generator (VSG) controls.
    \begin{align}
        \label{eq: outer angle loops}s \delta^{\text{ref}}_{i}=\omega_{0}\omega_i^{\text{ref}},s\omega^{\text{ref}}_{i} & = \frac{1}{M_{i}}\left(- D_{i}(\omega^{\text{ref}}_{i}-1) + P^{\text{ref}}_{i}- P_{i}\right)
    \end{align}
    \begin{equation}
        V^{\text{ref}}_{i}= V_{0,i}-D_{v,i}(Q_{i}-Q^{\text{ref}}_{i})\label{eq: outer
        voltage loops}
    \end{equation}
    Here, $M_{i}$, $D_{i}$, and $D_{v,i}$ denote the virtual inertia, frequency
    damping, and reactive-power droop gain, respectively. $\omega
    _{0}$ is the nominal frequency.
    $P^{\text{ref}}_{i}$, $Q^{\text{ref}}_{i}$,
    and $V_{0,i}$ are the active/reactive power and voltage setpoints.
    The proposed method can be extended to GFMs with more complex outer loops, such as droop and
    dVOC\cite{xiuqiang}. The virtual synchronous generator (VSG) controls are presented here for demonstration.

    The output voltage and current are $\hat{V}_{\text{o} dq,i}=V_{\text{o}d,i}+jV_{\text{o}q,i}$ and $\hat{I}_{\text{o} dq,i}={I}_{\text{o} d,i}+j {I}_{\text{o} q,i}$, respectively, yielding
    \begin{equation}
        P_{i}= \frac32\left( V_{\text{o}d,i}I_{\text{o}d,i}+ V_{\text{o}q,i}I_{\text{o}q,i}
    \right),
        Q_{i}= \frac32\left( V_{\text{o}q,i}I_{\text{o}d,i}- V_{\text{o}d,i}I_{\text{o}q,i}
    \right)
    \end{equation}

    \subsubsection{Inner Loop and Filter Dynamics}
    Based on the averaged switch model adopted in this paper, the LC filter dynamics are
    \begin{align}
        L_{\text{f},i}s \hat{I}_{\text{L} dq,i} & = \hat{V}_{\text{i} dq,i}- \hat{V}_{\text{o} dq,i}- j\omega_{\text{r},i}L_{\text{f},i}\hat{I}_{\text{L} dq,i}- r_{\text{f},i}\hat{I}_{\text{L} dq,i}\label{eq:Lf_current} \\
        C_{\text{f},i}s \hat{V}_{\text{o} dq,i} & = \hat{I}_{\text{L} dq,i}- \hat{I}_{\text{o} dq,i}- j\omega_{\text{r},i}C_{\text{f},i}\hat{V}_{\text{o} dq,i}\label{eq:Cf_voltage}
    \end{align}
    where $\hat{V}_{\text{i} dq,i}$ is the converter internal voltage and
    $\hat{I}_{\text{L} dq,i}$ is the filter inductor current.
    $\omega_{\text{r},i}= \omega_{0}\omega^{\text{ref}}_{i}$ is the angular frequency
    of the $dq$ frame generated by the angle loop.

    The inner voltage and current loops regulated by PI controllers are modeled as
    \begin{equation}
        \begin{aligned}
        \hat{I}_{\text{L} dq,i}^{\text{ref}}= & \left(k_{\text{vp},i}+{k_{\text{vi},i}}/{s}\right) \left(\hat{V}_{\text{o}dq,i}^{\text{ref}}-\hat{V}_{\text{o}dq,i}\right) \\
        & + \hat{I}_{\text{o}dq,i}+j \omega_{\text{0}}C_{\text{f},i}\hat{V}_{\text{o} dq,i}\label{eq:Voltage_loop}
        \end{aligned}
        \end{equation}
        \begin{equation}
        \begin{aligned}
        \hat{V}_{\text{i} dq,i}=              & \left(k_{\text{ip},i}+{k_{\text{ii},i}}/{s}\right) \left(\hat{I}_{\text{L}dq,i}^{\text{ref}}-\hat{I}_{\text{L}dq,i}\right) \\
        & +\hat{V}_{\text{o} dq,i}+j \omega_{\text{0}}L_{\text{f},i}\hat{I}_{\text{L}dq,i}\label{eq:Current_loop}
        \end{aligned}
    \end{equation}
    The overall converter model is adopted from \cite{liImpedanceCircuitModel2021}.

    \subsection{Coordinate Transformation}
    To facilitate closed-loop stability analysis, the local $dq$ coordinate of each converter is transformed into the global $DQ$ coordinate rotating at the speed $\omega_0$.
    In the $DQ$ reference frame, the $i$-th bus is represented by its complex voltage $\hat{V}_{\text{o} DQ, i}= V_{i}\angle \theta_{i}$, where $V_{i}$ and $\theta_{i}$ are magnitude and phase angle of terminal voltage, respectively. Define the angle of local $d$-axis in global $DQ$ frame as
    $\theta^{\mathrm{ref}}
    _{i}$,
    satisfying $\theta_i^{\mathrm{ref}} = \delta_i^{\mathrm{ref}}-\omega_0t$. Denote the rotation matrix by
    \begin{equation}
        \mathcal{T}(\theta) {\triangleq}
        \begin{bmatrix}
            \cos \theta & -\sin \theta \\
            \sin \theta & \cos \theta
        \end{bmatrix}
        \label{eq:T_theta}
    \end{equation}
    Then, the coordinate transformation between $dq$ and $DQ$ is
    \begin{equation}
        \begin{aligned}
            \begin{bmatrix}I_{\text{o}d,i}\\ I_{\text{o}q,i}\end{bmatrix} & = \mathcal{T}^{-1}(\theta_{i}^{\mathrm{ref}}) \begin{bmatrix}I_{\text{o}D,i}\\ I_{\text{o}Q,i}\end{bmatrix},\
            \begin{bmatrix}V_{\text{o}D,i}\\ V_{\text{o}Q,i}\end{bmatrix} = \mathcal{T}(\theta_{i}^{\mathrm{ref}})\begin{bmatrix}V_{\text{o}d,i}\\ V_{\text{o}q,i}\end{bmatrix},
        \end{aligned}
        \label{eq:dq_to_DQ}
    \end{equation}
    Where $X_{D}/X_{Q}$ represents the $D/Q$-axis component.

    \subsection{Network Modeling}
    In the global $DQ$ reference frame, the transmission line dynamics between buses $i$
    and $j$ can be written as
    \begin{equation}
        sL_{\mathrm{g},ij}\hat{I}_{\text{o},ij} = \hat{V}_{\text{o} DQ,i}- \hat{V}_{\text{o} DQ,j}- j\omega_{\text{0}}
        L_{\text{g},ij}\hat{I}_{\text{o},ij}- r_{\text{g},ij}\hat{I}_{\text{o},ij}
        \label{eq:Lg_current}
    \end{equation}
    where $L_{\text{g},ij},r_{\text{g},ij}$ are the line inductance and resistance. $\hat{I}_{\text{o},ij}= I_{\text{o}D,ij}+j I_{\text{o}Q,ij}$ the
    current on the line, and $\omega_{\text{0}}$ the frequency of the global
    $DQ$ frame.

    On the $i$-th bus, the dynamic relation among the shunt inductance/capacitance and the resistance can be described by:
    \begin{align}
         sL_{\mathrm{g},ii}\hat{I}_{\text{o},ii} =& \hat{V}_{\text{o} DQ,i}- j\omega_{\text{0}}
        L_{\text{g},ii}\hat{I}_{\text{o},ii}- r_{\text{g},ii}\hat{I}_{\text{o},ii}
        \label{eq:Lg_shuntcurrent}\\
        sC_{\mathrm{g},ii}\hat{V}_{\text{o}DQ,i} =& \hat{I}_{\text{o} ,ii}- j\omega_{\text{0}}
        C_{\text{g},ii}\hat{V}_{\text{o}DQ,i}- g_{\text{g},ii}\hat{V}_{\text{o}DQ,i}
        \label{eq:Cg_current}
    \end{align}
    where $L_{\text{g},ii},C_{\text{g},ii}$ and $r_{\text{g},ii}=1/g_{\text{g},ii}$ are shunt inductance, capacitance and resistance and $\hat{I}_{\text{o} ,ii} =  I_{\text{o}D,ii}+j I_{\text{o}Q,ii}$ is the shunt current in $DQ$ frame.

    \subsection{System Modeling}
    The overall multi-converter microgrid is modeled as a feedback interconnection between GFM converters and network dynamics. Specifically, the network maps terminal voltages $\hat{V}_{\text{o} DQ,i}$ to injected currents $\hat{I}
    _{\text{o} DQ,i}$, whereas each GFM maps measured currents
    $\hat{I}_{\text{o} DQ,i}$ to terminal voltages $\hat{V}_{\text{o} DQ,i}$.
    The relationships among different coordinates are shown in Fig.~\ref{fig:mu analysis}(a).
    \begin{figure}[h]
        \centering
        \includegraphics[width=\columnwidth]{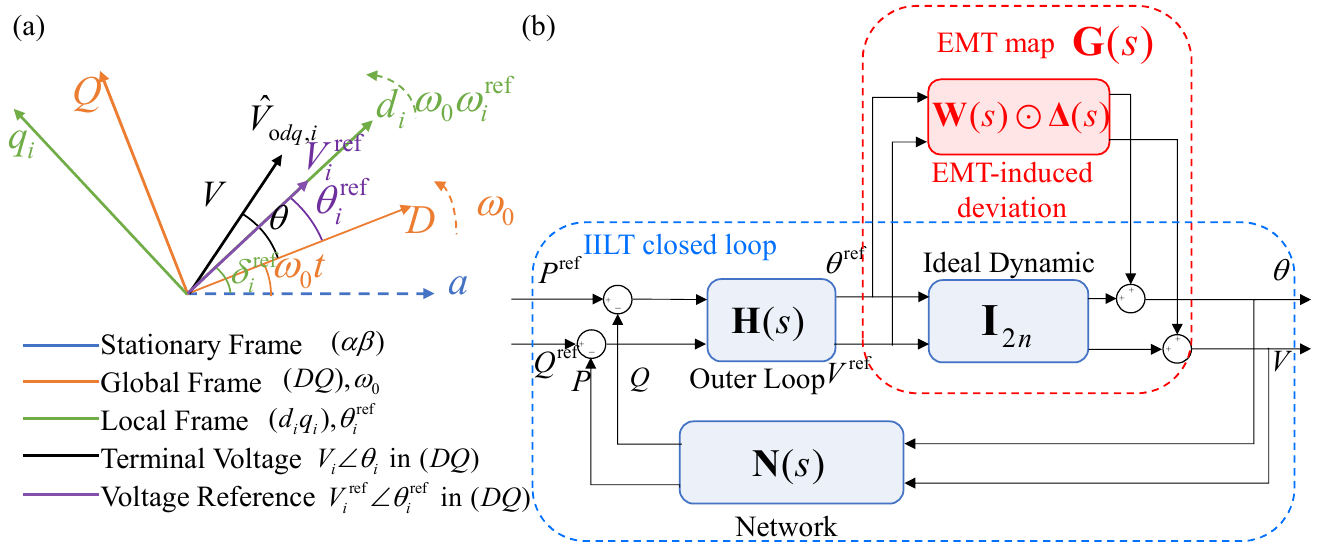}
        \caption{(a) The relationships among $DQ$, $dq$ and $\alpha\beta$ coordinates, and the vector representation of reference and terminal voltages. (b) Block diagram of the multi-GFM microgrid with converter-side EMT dynamics reformulated as an uncertain interconnection structure. The EMT-induced mismatch $\mathbf{W}(s)\odot\mathbf{\Delta}(s)$ enters the IILT feedback loop.
}
        \label{fig:mu analysis}
    \end{figure}

    \section{Formulating EMT impacts as Uncertainty}
    \label{sec: EMTFeedback}

    In this section, we formulate the device model as the interconnection between the IILT dynamics and the fast inner-loop EMT dynamics. Since the exact EMT impacts are usually unknown, we model them as a set of uncertain dynamics. Then the original model-validity issue is converted into a tractable, robust-stability analysis problem.

    \subsection{IILT Dynamics Modeling}

    First, we formulate the IILT dynamics in polar coordinates by linearizing the system at
    the equilibrium and expressing voltage perturbations in polar
    coordinates with $\hat{V}_{\text{o} DQ,i}= V_{i}\angle \theta_{i}$. By linearizing \eqref{eq:dq_to_DQ}, we obtain $\Delta V_{\mathrm{o} d,i}= \Delta V_{i}$ and
    $\Delta V_{\text{o}q,i}= V_{\text{o}d,i}^{*}(\Delta\theta_{i}- \Delta\theta_{i}
    ^{\text{ref}})$, where the superscript $*$ represents the equilibrium value.
    IILT focuses on phase and voltage dynamics, which can be modeled as a feedback interconnection of outer-loop dynamics
    $\mathbf{H}_{i}(s)$ and network dynamics $\mathbf{N}(s)$. For simplicity, we further define the following vectors:
    \begin{align}
        \Delta\mathbf{x}                & \triangleq[\Delta\theta_{1},\,\Delta V_{1},\ldots,\,\Delta\theta_{n},\,\Delta V_{n}]^{\top}\in\mathbb{R}^{2n},\nonumber                            \\
        \Delta\mathbf{s}                & \triangleq[\Delta P_{1},\,\Delta Q_{1},\ldots,\,\Delta P_{n},\,\Delta Q_{n}]^{\top}\in\mathbb{R}^{2n},\nonumber                                    \\
        \Delta\mathbf{s}^{\mathrm{ref}} & \triangleq[\Delta P_{1}^{\mathrm{ref}},\,\Delta Q_{1}^{\mathrm{ref}},\,\ldots,\,\Delta P_{n}^{\mathrm{ref}},\,\Delta Q_{n}^{\mathrm{ref}}]^{\top}\in\mathbb{R}^{2n}.
    \end{align}

    \subsubsection{IILT Dynamics of Converters}
    For the $i$-th converter,
    the converter terminal voltage, reference input, and power injections are
\begin{equation}
    \Delta\mathbf{x}_i \triangleq
    \begin{bmatrix}
        \Delta \theta_i \\
        \Delta V_i
    \end{bmatrix},
    \;
    \Delta\mathbf{x}_i^{\mathrm{ref}} \triangleq
    \begin{bmatrix}
        \Delta \theta_i^{\mathrm{ref}} \\
        \Delta V_i^{\mathrm{ref}}
    \end{bmatrix},
    \;
    \Delta\mathbf{s}_i \triangleq
    \begin{bmatrix}
        \Delta P_i \\
        \Delta Q_i
    \end{bmatrix}.\label{eq: x_is_i}
\end{equation}
Here, $\Delta\mathbf{s}_i$
denotes the power flow disturbance.
    The outer-loop transfer matrix
    $\mathbf{H}_{i}(s)\in\mathbb{C}^{2\times 2}$ maps the local active/reactive power deviation to the reference disturbance as
    \begin{equation}
        \label{eq:H_interface}
        \begin{bmatrix}
            \Delta\theta^{\mathrm{ref}}_{i} \\
            \Delta V^{\mathrm{ref}}_{i}
        \end{bmatrix}
        =\mathbf{H}_{i}(s)
        \begin{bmatrix}
            \Delta P^{\mathrm{ref}}_{i}-\Delta P_{i} \\
            \Delta Q^{\mathrm{ref}}_{i}-\Delta Q_{i}
        \end{bmatrix}.
    \end{equation}
   Eq. \eqref{eq:H_interface} provides a uniform description of heterogeneous converter dynamics. For VSG outer-loop dynamics
    \eqref{eq: outer angle loops}--\eqref{eq: outer voltage loops}, $\mathbf{H}_{i}(s)$ can be written as
    \begin{equation}\label{eq: H(s)-vsg}
        \mathbf{H}_{i}(s)=\text{diag}([\omega_0/(M_i s^2+D_i s), D_{v,i} ])
    \end{equation}
    Similarly, by defining different $\mathbf{H}_{i}(s)$, the framework readily extends to other GFM outer-loop dynamics.
    For the system model, define a block-diagonal matrix $\mathbf{H}
    (s)=\mathrm{blkdiag}(\mathbf{H}_{1}(s),\ldots,\mathbf{H}_{n}(s))\in \mathbb{C}
    ^{2n\times 2n}$.

    \subsubsection{Network Dynamics}

    The linearized network map $\mathbf{N}(s): \Delta \mathbf{x} \xrightarrow{} \Delta \mathbf{s} $ relates terminal angle-voltage perturbations to power injections at $(\bm\theta^{*},\mathbf{V}
    ^{*})$. In general, a dynamic network map $\mathbf{N}(s) = \mathbf{G}_{ \mathrm{net}}(s)$ is adopted as default to capture the interaction between the high-$R/X$ line and device dynamics \cite{Niko_ReduceOrderLine,Vorobev_High-Fidelity}. $\mathbf{G}_{ \mathrm{net}}(s)$ can be calculated using method proposed in \cite{fuQoAdmittanceModeling2023a}. When the $R/X$ ratio is small, it reduces to a static line model, i.e., $\mathbf{N}(s) = \mathbf{G}_{\mathrm{net}}(0)=\mathbf{A}_{ \mathrm{net}}$. In this circumstance, we have
    \begin{equation}        \label{eq:Anet_def}\Delta\mathbf{s}=\mathbf{A}_{ \mathrm{net}}\,\Delta\mathbf{x}
        ;\; \mathbf{A}_{ \mathrm{net}}\triangleq\frac{\partial\,[P_{1},\,Q_{1},\ldots,\,P_{n},\,Q_{n}]^{\top}}{\partial\,[\theta_{1},\,V_{1},\ldots,\,\theta_{n},\,V_{n}]^{\top}}
        \Big|_{\mathbf{x}^{*}}.
    \end{equation}
    where $\mathbf{A}_{ \mathrm{net}}$ is the power flow Jacobian of the Kron-reduced power network for systems with constant impedance loads.

    \subsubsection{Closed-Loop System IILT Dynamics}
    Combining \eqref{eq:H_interface} and \eqref{eq:Anet_def} and assuming $\Delta\mathbf{x}=\Delta\mathbf{x}^{\mathrm{ref}}$, the corresponding IILT model neglecting inner-loop dynamics
    is obtained as
    \begin{equation}
        \label{eq: MET MIMO System}\Delta\mathbf{x}=\Delta\mathbf{x}^{\mathrm{ref}}=\mathbf{H}(s)\left(\Delta\mathbf{s}
        ^{\mathrm{ref}}-\mathbf{N}(s)\,\Delta\mathbf{x}\right)
    \end{equation}

    \subsection{Converter EMT Dynamics}\label{sec: III-B}
    This subsection isolates the converter EMT dynamics through a local reference-tracking map and defines the model mismatch from ideal tracking for robust-stability analysis.

    For each converter, denote by $\mathbf{G}_{\mathrm{in},i}(s)\in\mathbb{C}^{2\times 2}$ the local
    closed-loop map from the terminal reference value disturbance
    $[\,\Delta \theta^{\mathrm{ref}}_{i},\,\Delta V^{\mathrm{ref}}_{i}\,]^{\top}$
    to the terminal voltage disturbance in polar coordinates
    $[\,\Delta \theta_{i},\,\Delta V_{i}\,]^{\top}$, i.e.,
    \begin{equation}
        \begin{bmatrix}
            \Delta \theta_{i} \\
            \Delta V_{i}
        \end{bmatrix}
        = \mathbf{G}_{\mathrm{in},i}(s)
        \begin{bmatrix}
            \Delta \theta^{\mathrm{ref}}_{i} \\
            \Delta V^{\mathrm{ref}}_{i}
        \end{bmatrix}. \label{eq:Gin_def}
    \end{equation}
    Physically, $\mathbf{G}_{\mathrm{in},i}(s)$ describes how the fast converter-side
    EMT dynamics \eqref{eq:Cf_voltage}, \eqref{eq:Lf_current}, \eqref{eq:Voltage_loop}, \eqref{eq:Current_loop}
    shape the tracking of the electromechanical reference commands at the $i$-th bus.

    Under the IILT approximation, the EMT dynamics are neglected, and the terminal variables are assumed to follow the outer-loop references ideally, i.e., the inner loop map from $ \Delta \mathbf{x}_i^\mathrm{ref}$ to $\Delta \mathbf{x}_i$ is $\mathbf{I}_{2}$.
    On the contrary, EMT dynamics introduce an undesirable lag in reference following.
    Therefore, to quantify its influence, we define the local EMT-induced tracking-dynamic mismatch as
    \begin{equation}
    \label{eq:E_def}\mathbf{E}_{i}(s)\triangleq \mathbf{G}_{\mathrm{in},i}(s)
        -\mathbf{I}_{2},\, i\in\{1,\ldots,n\},
    \end{equation}

Then, we introduce an EMT-induced uncertainty dynamics $\mathbf W_i(s)\in \mathcal{RH}_\infty^{2\times 2}$ to cover the EMT impacts $\mathbf{E}_i(s)$, where
the inequality is entry-wise defined, for $p,q$ component of $\mathbf{W}_i(s)$ and $\mathbf{E}_i(s)$ denoted as ${E}_{i,pq}$ and ${W}_{i,pq}$, it yields:
\begin{equation}\label{eq:W_entrywise_bound}
    \bigl |{E}_{i,pq}(j\omega)\bigr|\;\le \;\bigl|{W}_{i,pq}(j\omega)\bigr|, \quad \forall p,q\in\{1,2\}.
\end{equation}

Since the envelope condition is imposed only on the magnitude response,
each entry of the weight $\mathbf W_i(s)$ is selected as a stable, proper, and minimum-phase transfer matrix,
with all entries nonzero.
This choice allows the EMT-induced mismatch to be normalized entrywise as
    \begin{equation}
        \Delta_{i,pq}(s) \triangleq W_{i,pq}(s)^{-1}E_{i,pq}(s), \, \|
        \Delta_{i,pq}\|_{\infty}\le 1, \forall p,q\in\{1,2\}\label{eq:Delta_entry_def}
    \end{equation}
    and collect them into $\mathbf{\Delta}_{i}(s)=[\Delta_{i,pq}(s)]$. In other words, the reference-to-terminal mismatch can be written as
    \begin{equation}\label{equ: E=W delta}
        \mathbf{E}_{i}(s)=\mathbf{W}_{i}(s)\odot \boldsymbol{\Delta}_{i}(s).
    \end{equation}

    Consequently, the EMT tracking map can be modeled as
    \begin{equation}
        \mathbf{G}_{\mathrm{in},i}(s) = \mathbf{I}_{2}+ \mathbf{W}_{i}(s)\odot\mathbf{\Delta}
        _{i}(s), \label{eq:Gin_entrywise_uncertainty}
    \end{equation}

    In the next subsection, we stack local EMT tracking $\mathbf{G}_{\mathrm{in}, i}$ across all converters to obtain a network-level $\mathbf{M}$-$\mathbf{\Delta}$ uncertain interconnection (UI) structure for robust-stability analysis.

    \subsection{Uncertain Interconnection Structure for Robustness Analysis}

    This subsection considers the impact of local EMT-induced uncertainty at system level
    and reformulates the microgrid model as a feedback interconnection between EMT-induced uncertainty and the IILT system.

    The entrywise correlation \eqref{eq:Gin_entrywise_uncertainty} indicates that EMT dynamics can be viewed as additive uncertainties \cite{doyle2013feedback} on  inner-loop maps, i.e.,
    \begin{equation}
        \mathbf{G}(s)=\mathbf{I}+\mathbf{W}(s)\odot\mathbf{\Delta}(s). \label{eq:G_uncertain_network}
    \end{equation}
    where the block diagonal structure of inner loops is $\mathbf{G}(s) = \mathrm{blkdiag}\left( \mathbf{G}_{\mathrm{in},1}(s),\ldots
        ,\mathbf{G}_{\mathrm{in},n}(s) \right),$ with uncertainty weight as $ \mathbf{W}(s) = \mathrm{blkdiag}\left( \mathbf{W}_{1}(s),\ldots,\mathbf{W}
        _{n}(s) \right)$, and $\mathbf{\Delta}(s) = \mathrm{blkdiag}\left( \mathbf{\Delta}_{1}(s),\ldots
        ,\mathbf{\Delta}_{n}(s) \right)$.

    Embedding $\mathbf{G}(s)$ into the interconnection yields the model with local EMT-induced uncertainty
    \begin{equation}
        \Delta\mathbf{x}= \mathbf{G}(s)\mathbf{H}(s) \left( \Delta\mathbf{s}^{\mathrm{ref}}
        - \mathbf{N}(s)\Delta\mathbf{x}\right). \label{eq:whole_uncertain_system}
    \end{equation}

    When $\mathbf{G}(s)=\mathbf{I}_{2n}$, \eqref{eq:whole_uncertain_system} reduces
    to the IILT model \eqref{eq: MET MIMO System}. Therefore, EMT impacts
    are conservatively absorbed into the uncertainty weight $\mathbf{W}(s)$ and enter the system-level closed loop through the structured perturbation
    in \eqref{eq:G_uncertain_network}, while the outer-loop dynamics and network
    coupling remain represented by $\mathbf{H}(s)$ and
    $\mathbf{N}(s)$, respectively.

In Fig.~\ref{fig:mu analysis}(b), viewing the structure as a feedback interconnection between EMT uncertainty and the IILT closed-loop system, we can define the complementary sensitivity function $\mathbf{T}(s)$ to capture the response of the IILT system to EMT perturbations.
        Denote the output of $\mathbf{E}_i(s)$  as $\mathbf{d}=\Delta \mathbf{x}-\Delta\mathbf{x}^{\rm ref}$.
Then for system with \(\Delta\mathbf s^{\rm ref}=0\), the IILT loop gives
$\Delta \mathbf{x}^{\rm ref}
=
-\mathbf{H}(s)\mathbf{N}(s)
\left(\Delta \mathbf{x}^{\rm ref}+\mathbf{d}\right)$, or equivalently,
\begin{equation}\label{eq: Complem-Sensi}
    \Delta \mathbf{x}^{\rm ref}
=
\mathbf{T}(s)\mathbf{d},
\quad
\mathbf{T}(s)\triangleq
-\left(\mathbf{I}+\mathbf{H}(s)\mathbf{N}(s)\right)^{-1}
\mathbf{H}(s)\mathbf{N}(s),
\end{equation}
which governs how the IILT system responds to the EMT-induced mismatch.
In this formulation, the overall system can be viewed as a feedback interconnection between $\mathbf{W}(s)\odot\mathbf{\Delta}(s)$ and $\mathbf{T}(s)$, as presented in Fig.~\ref{fig:mu analysis}.

    \section{Measuring the Model Validity via Robust-Stability Certificate}\label{sec: RobustStability}

    With the microgrid
    reformulated as the IILT
    loop with structured EMT uncertainty, the remaining question is how
    to quantify whether the EMT uncertainty is small enough to remain non-destabilizing.
    To this end, we first rewrite it into a
    lifted interconnection form for standard $\mu$-analysis. Then we derive the frequency-resolved interaction index and finally propose the robust-stability condition.

    \subsection{Lifted Uncertain Interconnection Structure}\label{sec: IV-A}

    In Section \ref{sec: EMTFeedback}, the EMT perturbation is injected into the system dynamics through \eqref{eq:G_uncertain_network}.
    Despite its reflection of the physical
    origin of EMT influence, this Hadamard
    product form \eqref{equ: E=W delta} requires a further lifted representation to transform the problem into the standard
    robust-stability analysis.

    To obtain an equivalent lifted representation \cite{zhou1996robust}, the local perturbation
    block \eqref{equ: E=W delta} is reformulated as
    \begin{equation}\label{equ: lifted E_i}
        \mathbf{E}_{i}(s)=\bar{\mathbf{B}}\,\widetilde{\mathbf{W}}_{i}(s)\,\boldsymbol
        {\delta}_{i}(s)\,\bar{\mathbf{C}}.
    \end{equation}
    where $\boldsymbol{\delta}_{i}(s)$ and $\widetilde{\mathbf{W}}_{i}(s)$ are defined in \eqref{equ: def delta_i} and \eqref{equ: def W_i}, respectively.
    \begin{equation}\label{equ: def delta_i}
        \boldsymbol{\delta}_{i}(s)\triangleq \mathrm{diag}\!\big(\Delta_{i,11}(s)
        ,\Delta_{i,12}(s),\Delta_{i,21}(s),\Delta_{i,22}(s)\big),
    \end{equation}
    \begin{equation}\label{equ: def W_i}
        \widetilde{\mathbf{W}}_{i}(s)\triangleq \mathrm{diag}\!\big(W_{i,11}(s),W
        _{i,12}(s),W_{i,21}(s),W_{i,22}(s)\big).
    \end{equation}
    We define $\bar{\mathbf{B}}$ and $\bar{\mathbf{C}}$ as channel-selection matrices.
    To construct them, we define $\boldsymbol{\Pi}_{i}\in\mathbb{R}^{2n\times 2}$ to select the local $\Delta\mathbf{x}_i$ from the state vector $\Delta \mathbf x$, such that $\Delta\mathbf x_i=\boldsymbol\Pi_i^\top\Delta\mathbf x$,
    and $\mathbf{e}_{1}=[1\;0]^{\top}$, $\mathbf{e}_{2}=[0\;1]^{\top}$ are the
    basis vectors in $\mathbb{R}^{2}$. It yields
\begin{equation}
\bar{\mathbf C}^\top=
\begin{bmatrix}
\mathbf e_1\
\mathbf e_2\
\mathbf e_1\
\mathbf e_2
\end{bmatrix},
\qquad
\bar{\mathbf B}=
\begin{bmatrix}
\mathbf e_1 \ \mathbf e_1 \ \mathbf e_2\  \mathbf e_2
\end{bmatrix},
\end{equation}
    \begin{equation}
        \mathbf{C}_{i} = \bar{\mathbf{C}}\boldsymbol{\Pi}_{i}^{\top}, \ \mathbf{B}_{i} = \boldsymbol{\Pi}_i\bar{\mathbf{B}}
    \end{equation}
    where $\mathbf{C}_{i}$ performs channel duplication and $\mathbf{B}_{i}$ performs
    channel recombination. Then, the lifting matrix for system $\mathbf{W}(s), \mathbf{T}(s)$ can be built by
    \begin{equation}
        \mathbf{B}=
        \begin{bmatrix}
            \mathbf{B}_{1} & \cdots & \mathbf{B}_{n}
        \end{bmatrix}, \; \mathbf{C}=
        \begin{bmatrix}
            \mathbf{C}_{1}^{\top} & \cdots & \mathbf{C}_{n}^{\top}
        \end{bmatrix}^{\top}.
    \end{equation}

After weighting, recombination, nominal propagation, and duplication, the signal returned to the uncertainty input is
    \begin{equation}
        \mathbf{E}(s)=\mathbf{B}\,\widetilde{\mathcal{\mathbf{W}}}(s)\,\widetilde
        {\boldsymbol{\Delta}}_{e}(s)\,\mathbf{C},
    \end{equation}
    where
    $\widetilde{\boldsymbol{\Delta}}_{e}(s)=\mathrm{blkdiag}\big(\boldsymbol{\delta}_{1}(s),\dots,\boldsymbol{\delta}_{n}(s)\big)$ and $\widetilde{\mathcal{\mathbf{W}}}(s)=\mathrm{blkdiag}\big(\widetilde{\mathbf{W}}_{1}(s),\dots,\widetilde{\mathbf{W}}_{n}(s)\big)$.
    This lifted form separates the uncertainty structure $\widetilde{\boldsymbol{\Delta}}
    _{e}(s)$ from the known lifted frequency-dependent weight $\widetilde{\mathcal{\mathbf{W}}}
    (s)$, making $\widetilde{\boldsymbol{\Delta}}
    _{e}(s)$ a diagonal matrix, and therefore enabling standard structured singular-value analysis.

    Then we derive the lifted representation of $\mathbf{T}(s)$ in \eqref{eq: Complem-Sensi}.
    Under the lifted representation
    above, the lifted result $\widetilde{\mathbf{T}}(s)$  becomes
    \begin{align}
        \label{eq:lifted_T}
        \widetilde{\mathbf{T}}(s)\triangleq \mathbf{C}\,\mathbf{T}(s)\,\mathbf{B}
    \end{align}
     and the associated weighted loop matrix, denoted as $\widetilde{\mathbf{M}}(s)$, is
\begin{equation}\label{eq:liftedloop}
        \widetilde{\mathbf{M}}(s)\triangleq \widetilde{\mathbf{T}}(s)\,\widetilde{\mathcal{\mathbf{W}}}(s)
\end{equation}

    Fig.~\ref{fig:Interconnection} illustrates this lifting procedure: the
    original entry-wise EMT uncertainty
    $\mathbf{W}(s)\odot\boldsymbol{\Delta}(s)$ is converted into an equivalent
    structured interconnection.
    Through this procedure, the normalized uncertainty
    $\widetilde{\boldsymbol{\Delta}}_{e}(s)$ is diagonal and explicitly separated from the lifted weight
    $\widetilde{\mathcal{\mathbf{W}}}(s)$.

    \begin{figure}[h]
        \centering
        \includegraphics[width=0.45\textwidth]{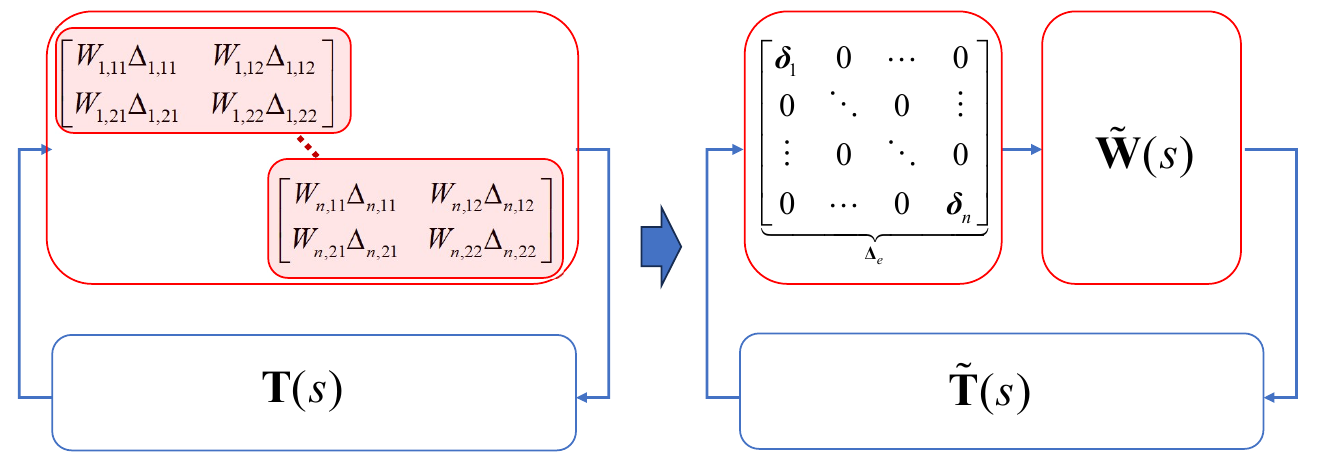}
        \caption{Lifting procedure converting the entry-wise Hadamard-form EMT uncertainty into a standard structured interconnection compatible with $\mu$-analysis.}
        \label{fig:Interconnection}
    \end{figure}

    \subsection{EMT-IILT Interaction Index Induced by $\mu-$Analysis}\label{sec: IV-B}

    The lifted interaction representation \eqref{eq:liftedloop} enables us to define a frequency-resolved measure for quantifying how strongly the EMT-induced uncertainty interacts with the IILT loop across various frequencies.

    \begin{definition}[frequency-resolved Interaction Index]
        The frequency-resolved interaction index $\eta(\omega)$ is
    \begin{equation}
        \eta(\omega)\triangleq \mu_{\widetilde{\boldsymbol{\Delta}}_e}\!\big(\widetilde{\mathbf{M}}
        (j\omega)\big),
    \end{equation}
    where $\mu_{\widetilde{\boldsymbol{\Delta}}_e}(\cdot)$ denotes the
    structured singular value with respect to the lifted uncertainty structure $\widetilde{\boldsymbol{\Delta}}_{e}$.
    \end{definition}
    At each
    frequency $\omega$, it measures the loop gain of the feedback interconnection between the structured EMT uncertainty and
    the IILT loops.
    A larger value of $\eta(\omega)$ indicates a stronger interaction between converter-side EMT and IILT closed-loop dynamics, in the sense that the system is more prone to uncertainty.

    \begin{definition}[Peak Interaction Value and Critical Frequency]
\label{def:peak_interaction}
Given profile $\eta(\omega)$, the \emph{peak interaction value} is defined as
\begin{equation}
    \mu_{\max} \triangleq \sup_{\omega \in \mathbb{R}}\, \eta(\omega),
\end{equation}
and if this $\sup$ is attained, the \emph{critical frequency} is defined as
\begin{equation}
    \omega^{\star} \triangleq \arg\max_{\omega \in \mathbb{R}}\, \eta(\omega).
\end{equation}
\end{definition}

In the subsequent analysis, the frequency-resolved index $\eta(\omega)$
identifies the spectral range of the strongest converter-side EMT--IILT interaction,
while $\mu_{\max}$ serves as a scalar robustness margin quantifying the overall
interaction severity.

    \subsection{Robust-Stability Certificate}\label{sec: IV-C}

    The interaction index introduced above
    is more than a numerical measure of interaction strength and an indicator of potential converter-side EMT-induced instability.
    We next show that the
    peak structured singular value provides a
    sufficient certificate for robust stability under admissible EMT
    uncertainty.

    To derive the stability condition below, our analysis requires the following additional assumptions.

    \begin{assumption}
        (Nominal stability and well-posedness):
        The IILT model is linearized at a stable equilibrium. Since the system is invariant under a uniform angle shift, a projection onto the relative-angle subspace (the quotient space) removes the common-angle mode. In this reduced coordinate, the nominal interconnection is internally stable, i.e.
        $(\mathbf{I}+\mathbf{H}(s)\mathbf{N}(s))^{-1} \mathbf{H}(s)\mathbf{N}(s)\in \mathcal{RH}
            _{\infty}$.

    \end{assumption}
    \begin{assumption}
        (Envelope coverage over the operating range): The selected uncertainty
        weight $\mathbf{W}(s)$ provides a conservative envelope for all
        admissible EMT-induced mismatch over all the frequency bands and
        operating-condition set.
    \end{assumption}

    Under these assumptions, the system can be viewed as a
    standard structured uncertain interconnection structure with lifted weighted loop
    matrix $\widetilde{\mathbf{M}}(s)$ in \eqref{eq:liftedloop}.
    The following theorem gives a sufficient condition for robust stability.

    \begin{theorem}
        Suppose Assumptions 1--2 hold and the uncertain interconnection structure is well
        posed for all admissible uncertainties.
        Then the uncertain interconnection closed-loop system is robust stable for all admissible
        structured uncertainties contained in $\widetilde{\boldsymbol{\Delta}}_{e}
        (s)$ if
        \begin{equation}
            \sup_{\omega\in\mathbb{R}}\mu_{\widetilde{\boldsymbol{\Delta}}_e}\!\big
            (\widetilde{\mathbf{M}}(j\omega)\big)<1.\label{eq:CLmu_Thm1}
        \end{equation}
    \end{theorem}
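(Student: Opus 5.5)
The plan is to reduce the theorem to the standard structured small-gain ($\mu$) theorem for an $\mathbf{M}$--$\boldsymbol{\Delta}$ interconnection, as in \cite{zhou1996robust}. We first show that the uncertain model \eqref{eq:whole_uncertain_system} is exactly the feedback loop of $\widetilde{\mathbf{M}}(s)$ with $\widetilde{\boldsymbol{\Delta}}_e(s)$. We then invoke the $\mu$ theorem, using Assumption~1 for nominal stability of $\widetilde{\mathbf{M}}$ and Assumption~2 for membership of the true EMT mismatch in the admissible set.

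\textbf{Step 1 (loop equivalence).} With $\Delta\mathbf{s}^{\rm ref}=0$ we have $\mathbf{d}=\mathbf{E}(s)\Delta\mathbf{x}^{\rm ref}$, where $\mathbf{E}=\mathrm{blkdiag}(\mathbf{E}_i)=\mathbf{B}\widetilde{\mathcal{W}}\widetilde{\boldsymbol{\Delta}}_e\mathbf{C}$ by \eqref{equ: lifted E_i}. By \eqref{eq: Complem-Sensi}, $\Delta\mathbf{x}^{\rm ref}=\mathbf{T}\mathbf{d}$. Introduce the lifted signals $\mathbf{z}=\mathbf{C}\Delta\mathbf{x}^{\rm ref}$ and $\mathbf{w}=\widetilde{\boldsymbol{\Delta}}_e\mathbf{z}$. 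Then $\mathbf{d}=\mathbf{B}\widetilde{\mathcal{W}}\mathbf{w}$, so $\mathbf{z}=\mathbf{C}\mathbf{T}\mathbf{B}\widetilde{\mathcal{W}}\mathbf{w}=\widetilde{\mathbf{M}}\mathbf{w}$. This is exactly the $\widetilde{\mathbf{M}}$--$\widetilde{\boldsymbol{\Delta}}_e$ loop.

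A routine check of the channel-selection matrices confirms that $\bar{\mathbf{B}}\,\mathrm{diag}(W_{pq}\Delta_{pq})\,\bar{\mathbf{C}}=\mathbf{W}\odot\boldsymbol{\Delta}$ entrywise. It also confirms that $\mathbf{B}_j\widetilde{\mathbf{W}}_j\boldsymbol{\delta}_j\mathbf{C}_i$ vanishes for $i\neq j$ after block stacking, so the cross terms drop out.

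We then argue that internal stability of the physical loop, meaning $\mathbf{H},\mathbf{N},\mathbf{G}$ in \eqref{eq:whole_uncertain_system}, is equivalent to internal stability of this reduced loop. The nominal part is stable in the relative-angle coordinates of Assumption~1. The weights $\widetilde{\mathcal{W}}$ are stable and minimum phase, so no unstable pole--zero cancellations are introduced by the weighting or lifting.

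\textbf{Step 2 (stability of $\widetilde{\mathbf{M}}$ and admissibility).} Assumption~1 gives $\mathbf{T}\in\mathcal{RH}_\infty$, and $\mathbf{B},\mathbf{C}$ are constant matrices. Since $\widetilde{\mathcal{W}}\in\mathcal{RH}_\infty$, we obtain $\widetilde{\mathbf{M}}\in\mathcal{RH}_\infty$. Assumption~2 together with \eqref{eq:W_entrywise_bound}--\eqref{eq:Delta_entry_def} places each $\Delta_{i,pq}$ in $\mathcal{RH}_\infty$ with $\|\Delta_{i,pq}\|_\infty\le 1$. Hence the actual perturbation belongs to the set $\mathcal{B}\widetilde{\boldsymbol{\Delta}}_e$ of diagonal stable perturbations of norm at most one.

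For stability of $\Delta_{i,pq}$ itself, we use that $\mathbf{G}_{\mathrm{in},i}$ is stable, since the inner loops are stable, and that $W_{i,pq}$ is minimum phase.

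\textbf{Step 3 (apply the $\mu$ theorem).} Fix any $\widetilde{\boldsymbol{\Delta}}_e\in\mathcal{B}\widetilde{\boldsymbol{\Delta}}_e$. It suffices, by the multivariable Nyquist argument, to show that $\det(\mathbf{I}-\widetilde{\mathbf{M}}(s)\widetilde{\boldsymbol{\Delta}}_e(s))\neq 0$ on the closed right half-plane including $\infty$. We follow the standard homotopy argument. Consider $\tau\widetilde{\boldsymbol{\Delta}}_e$ for $\tau\in[0,1]$. At $\tau=0$ the loop is nominally stable. By the definition of $\mu$, condition \eqref{eq:CLmu_Thm1} gives $\mu(\widetilde{\mathbf{M}}(j\omega))<1$ for all $\omega$. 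Since $\bar\sigma(\tau\widetilde{\boldsymbol{\Delta}}_e(j\omega))\le 1$, this implies $\det(\mathbf{I}-\widetilde{\mathbf{M}}(j\omega)\tau\widetilde{\boldsymbol{\Delta}}_e(j\omega))\neq 0$ for every $\omega$ and $\tau$. Well-posedness covers $\omega=\infty$. By continuity, no closed-loop pole crosses the imaginary axis as $\tau$ goes from $0$ to $1$, so the loop is stable at $\tau=1$.

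\textbf{Expected obstacle.} The technical delicacy is Step 1's internal-stability equivalence, and in particular the common-angle mode. The full model has a zero eigenvalue from the angle-shift invariance, so $\mathbf{T}$ is only in $\mathcal{RH}_\infty$ after the quotient projection. To handle this, we will carry out the lifting in the reduced coordinates. We will also verify that the uncertainty preserves this invariance: a uniform shift of all $\theta_i^{\rm ref}$ shifts all $\theta_i$ equally, because the $\mathbf{G}_{\mathrm{in},i}$ are referenced to their own frames. If that argument fails, we fall back on stating robust stability modulo the common-angle mode.
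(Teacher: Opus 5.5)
Your proposal is correct and follows the same route as the paper: it reduces robust stability to nonsingularity of $\det(\mathbf{I}-\widetilde{\boldsymbol{\Delta}}_e(j\omega)\widetilde{\mathbf{M}}(j\omega))$ for all $\omega$ and reads this off from the definition of $\mu$, and your loop-equivalence check, the $\widetilde{\mathbf{M}}\in\mathcal{RH}_\infty$ and admissibility step, and the homotopy/Nyquist argument just make explicit what the paper's short proof takes from the standard $\mu$ theorem. One minor slip: the cross terms drop out because $\widetilde{\mathcal{\mathbf{W}}}\widetilde{\boldsymbol{\Delta}}_e$ is block diagonal, not because $\mathbf{B}_j\widetilde{\mathbf{W}}_j\boldsymbol{\delta}_j\mathbf{C}_i$ vanishes for $i\neq j$; that product equals $\boldsymbol{\Pi}_j\mathbf{E}_j\boldsymbol{\Pi}_i^{\top}$, which is generally nonzero.
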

    \begin{proof}
    Under the lifted representation \eqref{eq:liftedloop},
    the EMT-induced perturbation enters the IILT loop
    through the structured uncertainty block $\widetilde{\boldsymbol{\Delta}}_{e}
    (s)$ and the lifted weighted nominal operator $\widetilde{\mathbf{M}}(s)$. Robust stability is therefore equivalent to the non-singularity of
    \begin{equation}
        \det\!\big(\mathbf{I}-\widetilde{\boldsymbol{\Delta}}_{e}(j\omega)\widetilde{\mathbf{M}}
        (j\omega)\big)\neq 0, \; \forall \omega\in\mathbb{R},
    \end{equation}
    for all admissible uncertainties. By the definition of the structured singular
    value \cite{skogestad2005multivariable}, the condition \eqref{eq:CLmu_Thm1}
    guarantees this nonsingularity at every frequency, and hence guarantees robust internal stability of the uncertain closed-loop system.
    \end{proof}

    This result shows that the interaction index in Section \ref{sec: IV-B} can be further used to guarantee a robust stability analysis of the system.
    If its peak value satisfies $\mu_{\max}<1$, then the
    EMT-induced mismatch is non-destabilizing to the IILT model. Otherwise, robust stability cannot be guaranteed.
    For implementation purposes, an alternative is given by
    $
        \|\widetilde{\mathbf{M}}(s)\|_{\infty}<1,
    $
    which is easier to evaluate but more conservative.

    \section{Implementation and Validation Approaches}\label{sec: PracticalImplementation}

    Section \ref{sec: RobustStability} establishes that the robust-stability certificate requires an EMT-induced uncertainty weight at each converter bus. Hence, this section aims to construct, for each device, a stable and proper uncertainty weight $\mathbf{W}_i(s)$ that conservatively upper-bounds the EMT-induced mismatch.

    \subsection{Target Envelope and Admissibility Criterion}\label{sec:V-A}

\subsubsection{Admissibility Condition Over the Operating Family}
    The uncertainty weight $\mathbf{W}_i(s)$ was defined in Section \ref{sec: III-B} via the entrywise bound \eqref{eq:W_entrywise_bound}  for a single operating condition. For robust coverage, this condition must hold uniformly over the operating family $\Omega_\rho$.
    Here, $\Omega_\rho=\{\rho_1,\rho_2,\ldots,\rho_{N_\Omega}\}$ denotes the operating
family used for envelope construction. In this paper, $\Omega_\rho$ is chosen as the
Cartesian product of representative levels of grid strength, power-flow
condition, and network $R/X$ ratio.
Accordingly, $\mathbf{W}_i(s)$ is admissible if \eqref{eq:W_entrywise_bound} holds for all $\mathbf{E}_{i,\rho}(s)$ among $\rho\in\Omega_\rho$ simultaneously.

\subsubsection{Uncertainty Weight Construction}
The uncertainty weight is constructed in two steps.
First, the frequency responses of the non-ideal EMT dynamics are evaluated over the operating family $\Omega_\rho$, obtaining the mismatch family $\{\mathbf{E}_{i,\rho}(s)\}_{\rho\in\Omega_\rho}$, and its pointwise maxima are extracted to form envelope curves. This step captures the dominant shape of the worst-case frequency dependence.

Second, a stable and proper rational transfer function $\mathbf{W}_{i}(s)$ is fitted to be the upper bound of the envelope curve, yielding a realizable upper bound on the uncertain EMT family.
If the fitted uncertainty weight does not cover all pointwise maxima, we multiply it by $(1+\epsilon)$ until it covers all points, where $\epsilon=0.05$ is sufficient in the presented case studies.

Since the admissibility condition \eqref{eq:W_entrywise_bound} only constrains the magnitude,
if the EMT-induced mismatch $\mathbf{E}$ is non-minimum-phase, we retain its magnitude envelope but realize the fitted weight as a stable minimum-phase transfer function.

\subsubsection{Deployment of the Constructed Weight}
    In a heterogeneous microgrid where device $i$ has accessible models and device $j$ does not, the global weight can be constructed as:
    \begin{equation}
    \mathbf{W}(s) = \mathrm{blkdiag}\bigl(\ldots, \mathbf{W}_{i}^{\mathrm{model}}(s),\ldots, \mathbf{W}_{j}^{\mathrm{meas}}(s),\ldots\bigr)\end{equation}

    In the following subsections, both the model-based and measurement-based procedures are
    treated as alternative implementations for constructing the same admissible $\mathbf{W}_{i}(s)$.

    \subsection{Model-Based Construction of Uncertainty Weight}\label{sec: V-B}

This subsection presents a method for constructing mismatch family $\{\mathbf{E}_{i,\rho}(s)\}_{\rho\in\Omega_\rho}$ and then obtaining $\mathbf{W}_i(s)$ when a full model of the inner-loop and network dynamics is available.
For each operating point $\rho \in \Omega_\rho$,
with ports in \eqref{eq: x_is_i}, the converter-side EMT dynamics are represented by the block.
\begin{equation}
    \Delta\mathbf{x}_i
    =
    \mathbf{\Phi}_{\mathbf{x},i,\rho}(s)\,\Delta\mathbf{x}_i^{\mathrm{ref}}
    +
    \mathbf{\Phi}_{\mathbf{s},i,\rho}(s)\,\Delta\mathbf{s}_i,
    \label{eq:device-port-model}
\end{equation}
where
$\mathbf{\Phi}_{\mathbf{x},i,\rho}(s)\in\mathbb{C}^{2\times 2}$ maps the local reference
perturbations to the terminal polar angle-voltage variables, and
$\mathbf{\Phi}_{\mathbf{s},i,\rho}(s)\in\mathbb{C}^{2\times 2}$ describes how the local
power-coupling channels affect the same terminal outputs.

For each sampled operating point $\rho$, the matrices
$\mathbf{\Phi}_{\mathbf{x},i,\rho}(s)$ and $\mathbf{\Phi}_{\mathbf{s},i,\rho}(s)$ are obtained by
linearizing the converter-side EMT subsystem around that operating condition.
The resulting model is written in the descriptor form
\begin{equation}
    s\mathbf{y}_i
    =
    \mathcal{A}_{i,\rho}\mathbf{y}_i
    +
    \mathcal{B}_{i,\rho}\mathbf{u}_i
    +
    \mathcal{E}_{i,\rho}\,s\mathbf{u}_i,
    \
    \Delta\mathbf{x}_i
    =
    \mathcal{C}_{i,\rho}\mathbf{y}_i
    +
    \mathcal{D}_{i,\rho}\mathbf{u}_i,
    \label{eq:descriptor-local-device}
\end{equation}
\begin{equation}
        \begin{bmatrix}\mathbf \Phi_{\mathbf x,i,\rho}(s)&\mathbf \Phi_{\mathbf{s},i,\rho}(s)\end{bmatrix}
    =\mathcal C_{i,\rho}(sI-\mathcal A_{i,\rho})^{-1}(\mathcal B_{i,\rho}+s\mathcal E_{i,\rho})+\mathcal D_{i,\rho}.
\end{equation}
where $ \mathbf{u}_i \triangleq
    \begin{bmatrix}
        (\Delta\mathbf{x}_i^{\mathrm{ref}})^{\top}
        \Delta\mathbf{s}_i^{\top}
    \end{bmatrix}^{\top}.$ and
$\mathcal{A}_{i,\rho}$,
$\mathcal{B}_{i,\rho}$,
$\mathcal{C}_{i,\rho}$,
$\mathcal{D}_{i,\rho}$,
$\mathcal{E}_{i,\rho}$
are the operating-point-dependent matrices produced by the local EMT
linearization.
The explicit construction of the transfer matrix from
\eqref{eq:descriptor-local-device}, together with the detailed symbolic
expressions, is deferred to Appendix~\ref{app:ABCDE}.

The local network dynamic response is represented through a local polar-to-power mapping
\begin{equation}
    \Delta\mathbf{s}_i
    =
    \mathbf{F}_{\mathrm{env},i,\rho}(s)\,\Delta\mathbf{x}_i,
    \label{eq:local-network-map}
\end{equation}
where
$\mathbf{F}_{\mathrm{env},i,\rho}(s)\in\mathbb{C}^{2\times 2}$
maps the terminal polar angle and voltage disturbances to the corresponding active- and reactive-power disturbances at the $i$-th converter under condition $\rho$.
Here, $\mathbf{F}_{\mathrm{env},i,\rho}(s)$ is introduced only as an auxiliary terminal environment for weight construction, closing the local device port to capture short-circuit-ratio-, power-flow-, and $R/X$-dependent variations of $\mathbf{G}_{\mathrm{in},i,\rho}$; it
is not an additional network block of the final multi-GFM interconnection and
is conservatively absorbed into $\mathbf{W}_i(s)$.

The single converter infinite bus (SCIB) equivalent can also be used to approximate $\mathbf{F}_{\mathrm{env}, i,\rho}$, in which the surrounding multi-machine network is replaced by a Thevenin voltage source parametrized by short-circuit ratio (SCR), power injections, and $R/X$. Under this parametrization, the mismatch family of SCIB-based uncertainty weights is intended to conservatively represent the EMT mismatch under representative grid conditions.

Substituting \eqref{eq:local-network-map} into
\eqref{eq:device-port-model} gives
\begin{equation}
    \Delta\mathbf{x}_i
    =
    \mathbf{\Phi}_{\mathbf{x},i,\rho}(s)\,\Delta\mathbf{x}_i^{\mathrm{ref}}
    +
    \mathbf{\Phi}_{\mathbf{s},i,\rho}(s)\,
    \mathbf{F}_{\mathrm{env},i,\rho}(s)\,
    \Delta\mathbf{x}_i.
\end{equation}
Therefore, the desired local polar interface map is obtained as
\begin{equation}
    \mathbf{G}_{\mathrm{in},i,\rho}(s)
    =
    \left(
    \mathbf{I}_2
    -
    \mathbf{\Phi}_{\mathbf{s},i,\rho}(s)\,
    \mathbf{F}_{\mathrm{env},i,\rho}(s)
    \right)^{-1}
    \mathbf{\Phi}_{\mathbf{x},i,\rho}(s)
    \label{eq:Gin-model-based}
\end{equation}
through
$
    \Delta\mathbf{x}_i
    =
    \mathbf{G}_{\mathrm{in},i,\rho}(s)\,
    \Delta\mathbf{x}_i^{\mathrm{ref}}.
$

The construction in \eqref{eq:Gin-model-based} is carried out separately for
each sampled operating point $\rho \in \Omega_\rho$, yielding a family of local
interface maps
$\{\mathbf{G}_{\mathrm{in},i,\rho}(s)\}_{\rho\in\Omega_\rho}$. The EMT-induced mismatch family is obtained by \eqref{eq:E_def}, and we can use it to construct the uncertainty weight $\mathbf{W}_i(s)$.

    \subsection{Measurement-Based Construction of Uncertainty Weight}\label{sec: V-C}

    When models of inner-loop dynamics are unavailable, proprietary, or only
    partially known,
    $\mathbf{G}_{\mathrm{in},i}(s)$ and the EMT-induced uncertainty weight can be constructed entirely from terminal frequency-response measurements.
    The novelty of this section is not the sweep procedure itself, but the measurement-based uncertainty construction.

    \subsubsection{Measurement Interface and Analysis Coordinates}

    For the $i$-th converter, the measurement interface is defined in polar
    variables: small amplitude input signals in $\Delta\mathbf{x}^{\mathrm{ref}}_{i}=[\Delta\theta^{\mathrm{ref}}
    _{i},\Delta V^{\mathrm{ref}}_{i}]^{\top}$ are injected, and the resulting terminal response $\Delta\mathbf{x}_{i}=[\Delta
    \theta_{i},\Delta V_{i}]^{\top}$ is recorded, giving
    \begin{equation}
        \Delta\mathbf{x}_{i}=\mathbf{G}^{\mathrm{meas}}_{\mathrm{in},i}(s) \,\Delta\mathbf{x}
        ^{\mathrm{ref}}_{i}, \; \mathbf{E}^{\mathrm{meas}}_{i}(s)=\mathbf{G}^{\mathrm{meas}}_{\mathrm{in},i}(s)-\mathbf{I}_{2}.
    \end{equation}

    This identification is repeated across all operating conditions in $\Omega_\rho$ to ensure that the resulting weight covers the prescribed uncertainty.

    \subsubsection{Identification Protocol}

    At frequency $\omega_{\mathrm{scan}}$,
    small-amplitude sinusoidal perturbations are injected sequentially into the two reference channels $\Delta\theta_{i}^{\mathrm{ref}}$ and $\Delta V_{i}^{\mathrm{ref}}$. The responses of the two independent single-channel experiments are combined to form the two columns of $\mathbf{G}^{\mathrm{meas}}_{\mathrm{in},i}(j\omega_{\mathrm{scan}})$.

    A full MIMO identification procedure is required because both the inner electromagnetic dynamics and the network dynamic feedback paths introduce cross-channel coupling between the angle and voltage terminals.

    The perturbation may be superimposed on the normally generated outer-loop
    reference using closed loop identification, or applied with the outer-loop reference frozen at its operating-point
    value. The operating point is selected from the prescribed family $\Omega_\rho$, satisfies the steady-state power-flow equations, and is kept fixed during each frequency-response experiment.

    In practice, frequency-response data are collected on a finite interaction
    band $[\omega_{\min},\omega_{\max}]$. The band is chosen to contain the dominant frequencies relevant to the robustness test. At low frequencies, $\mathbf{G}_{\mathrm{in}, i}(0)=\mathbf{I}_2$ makes the mismatch envelope sufficiently small near dc. At high frequencies, the roll-off of $\mathbf T(s)$ and the bounded high-frequency response of $\mathbf{G}_{\mathrm{in},i}(s)$ ensure stability.
    Thus, the measured band is sufficient if it covers the interaction
    frequencies of interest.

    \subsubsection{Weight Construction}

    For each condition $\rho$, we compute
    \begin{equation}
        \mathbf{E}^{\mathrm{meas}}_{i,\rho}(j\omega)= \mathbf{G}^{\mathrm{meas}}_{\mathrm{in},i,\rho}
        (j\omega)-\mathbf{I}_{2}.
    \end{equation}
    Subsequently, the weight matrix $\mathbf{W}_{i}(s)$ is constructed following the method established in Section \ref{sec:V-A}.

    \section{Case Study}\label{sec: Case}

    This section validates the proposed robustness framework on a simplified 3-bus microgrid and a modified CIGRE microgrid with multiple GFM converters.

The three-bus system in Fig.~\ref{fig:3BusSystem} comprises three GFM converters interconnected through inductive tie lines with impedances $z_{13} = 0.3j+0.09~\text{p.u.}$ and $z_{12} = 0.6j+0.18~\text{p.u.}$, with a constant-impedance load with admittance $Y_{\mathrm{load}} = [0.0371-0.0011j,0.0371-0.0011j,0.0758-0.0022j]~\text{p.u.}$ at each bus and dynamic network $\mathbf{N}(s)=\mathbf{G}_{\mathrm{net}}(s)$.
We set the operating point as $\mathbf{x}^* = [0, 1.0, 0.01919, 1.0, 0.0968, 1.0]$.
The converter dynamics have been modeled in Section \ref{sec: MicrogridModel} and the parameters are presented in Table~\ref{tab:gfm_parameters}, where parameters without converter indices are identical across the three GFMs.
    \begin{figure}[h]
        \centering  \includegraphics[width=0.45\textwidth]{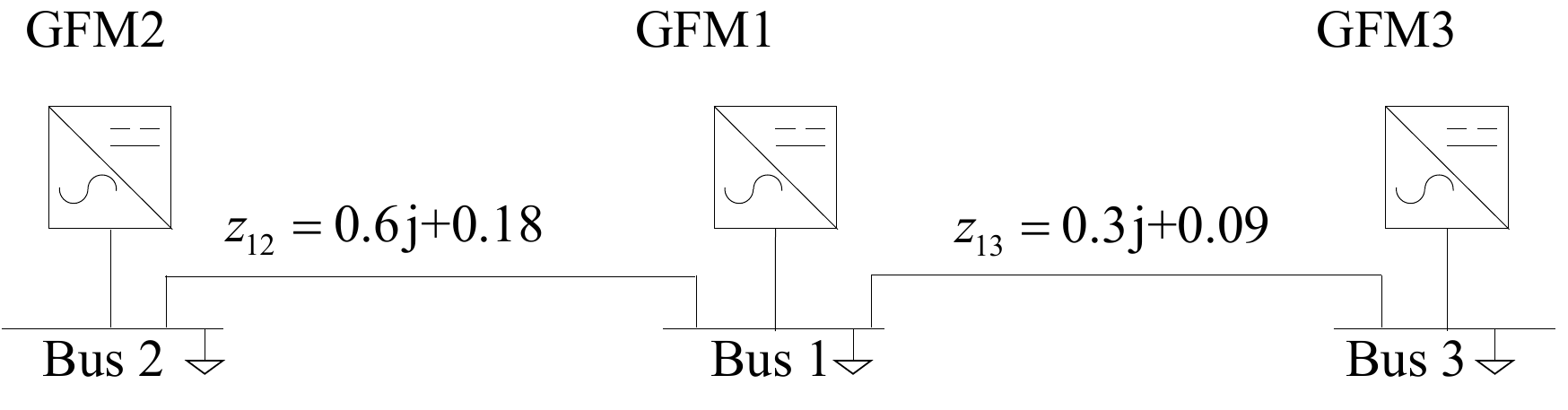}
        \caption{Three-bus GFM microgrid used for case study validation. Three GFM converters are interconnected through tie lines, each bus equipped with a constant-impedance load.}
        \label{fig:3BusSystem}
        \end{figure}

    \begin{table}[ht]
    \centering
    \caption{GFM Outer-Loop, Inner-Loop, and Filter Parameters}
    \label{tab:gfm_parameters}
    \begin{tabular}{c c c}
        \hline
        Category & Parameter & Value (p.u.) \\
        \hline
        \multirow{3}{*}{Outer loop}
        & $M$ & 1.5 \\
        & $D$ & 3.5 \\
        & $D_v$ & 0.001 \\
        \hline
        \multirow{4}{*}{Voltage/current loops}
        & $k_{\mathrm{vp}}$ & 0.2,0.15,0.25 \\
        & $k_{\mathrm{vi}}$ & 0.1 \\
        & $k_{\mathrm{ip}}$ & 1.5 \\
        & $k_{\mathrm{ii}}$ & 39.27 \\
        \hline
        \multirow{3}{*}{LC filter}
        & $L_\mathrm{f}$ & 0.044 \\
        & $C_\mathrm{f}$ & 0.012 \\
        & $r_\mathrm{f}$ & 0.001 \\
        \hline
    \end{tabular}
\end{table}

  \subsection{Interaction Index Validation}
    In the first case, we verify that the phasor-domain model may provide an invalid stability result
    by constructing a case where IILT with a dynamic network is stable while EMT is unstable, and then validating the proposed frequency-resolved interaction index.

    Three models are evaluated at this operating point: the IILT model, the full EMT model, and the uncertain interconnection
    model with the theoretically derived uncertainty weight $\mathbf{W}(s)$ and $\widetilde{\mathbf{\Delta}}_i=-\mathbf{I}_4$, which represent a system within the uncertainty weight.
    In this case, we use the model-based method to calculate the uncertainty weights $\mathbf{W}_i(s)$, with operating family $\Omega_\rho$ as $P^{\mathrm{ref}} \in [0, 0.6], R/X \in [0, 0.3], SCR \in [2, 6.5]$ in the SCIB system, which can cover the system's operating conditions.
    This range covers the possible power injection, resistance penetration, and the largest short-circuit current possible for this system.
    This enables us to calculate the uncertainty weight and use it to construct the UI model.

\begin{figure}[h]
        \centering
        \includegraphics[width=0.48\textwidth]{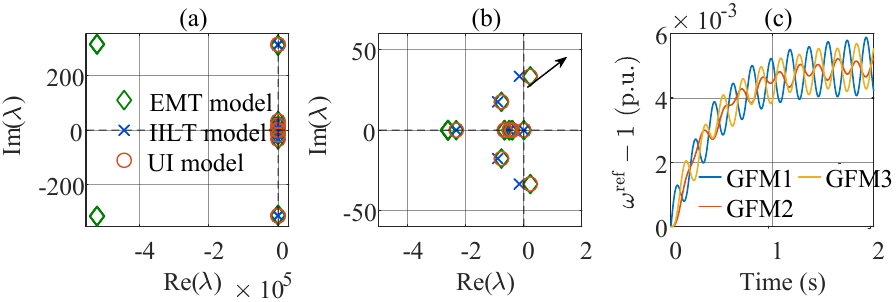}
        \caption{Validation scenario in which the IILT model predicts stability while the full EMT model is unstable. (a) Eigenvalues of the full EMT model, the IILT and uncertain interconnection (UI) models. (b) Eigenvalues in a zoomed view, confirming that the EMT and UI model can be unstable when IILT is stable. (c) Time-domain EMT simulation of converter frequencies $\omega_i^{\mathrm{ref}}(t)$, exhibiting oscillations at approximately $33.5
        ~\text{rad/s}$.}.
        \label{fig:Poles+EMT}
        \end{figure}

    The poles of each model are computed and compared in Fig.~\ref{fig:Poles+EMT}. The IILT model predicts that all modes in electromechanical timescale are stable, with eigenvalues confined to the left-half plane. However, the full EMT model reveals that the same electromechanical modes migrate into the right-half plane due to destabilizing interactions introduced by the inner-loop dynamics, which are absent from the reduced-order model.
    The UI model, equipped with the structured uncertainty weight $\mathbf{W}(s)$, reproduces this migration: the robust-stability criterion is violated at the same operating point,
    confirming that
    the proposed framework can capture the potential EMT-induced instability without resorting to the full-order model.

    The time-domain simulation of the EMT model in Fig.~\ref{fig:Poles+EMT}(c) validates these predictions; the trajectories of $\omega_i^{\mathrm{ref}}(t)$ oscillate at approximately $\omega_{\mathrm{osc}} = 33.5~\text{rad/s}$.

    \begin{figure}[h]
    \centering
    \includegraphics[width=0.48\textwidth]{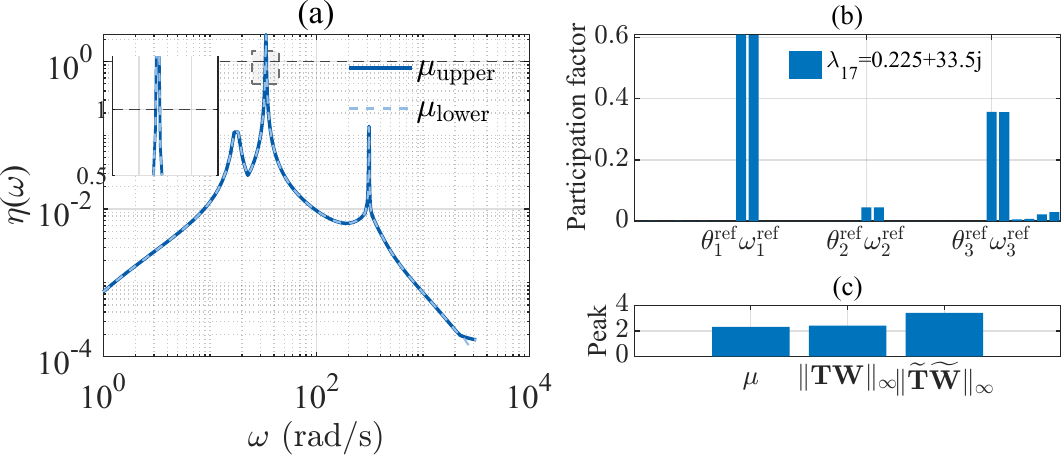}
    \caption{Frequency-resolved EMT–IILT interaction analysis for the unstable scenario. (a) Structured singular value $\eta(\omega)$ versus frequency, peaking at $\omega^\star = 33.5~\text{rad/s}$. (b) Participation factors of the unstable mode, confirming electromechanical state $(\theta_i^{\mathrm{ref}},\omega_i^{\mathrm{ref}})$ dominance. (c) Comparison of peak values among  $\mu=\eta(\omega),
    \|\mathbf{TW}\|_{\infty}$ and $\|\mathbf{\widetilde{T}\widetilde{W}}\|_{\infty}$, demonstrating that the structured $\mu$ criterion reduces conservatism.}
    \label{fig:Robustconditon}
    \end{figure}

    We next evaluate the proposed interaction index in Fig.~\ref{fig:Robustconditon}(a).
    The peak interaction value estimated by the upper bound $\mu_{\max}$ is computed across frequency, and the condition $\mu_{\max} < 1$ is violated. The peak of $\mu_{\widetilde{\mathbf{\Delta}}_e}$ occurs at $\omega^{\star} = 33.5~\text{rad/s}$ and $\mu_{\max}>1$ for $ \omega\in[33.5, 33.6]~\text{rad/s}$, which coincides with the dominant oscillation frequency $\omega_{\mathrm{osc}} = 33.5~\text{rad/s}$ identified from the EMT simulation, as annotated in Fig.~\ref{fig:Poles+EMT}(c).
    This agreement confirms that the $\mu$-based criterion correctly identifies the critical frequency without certification, at which the inner-loop uncertainty potentially destabilizes the IILT.

    Furthermore, the participation analysis in Fig.~\ref{fig:Robustconditon}(b) reveals that the unstable mode is dominated by electromechanical states $\theta_i^{\mathrm{ref}},\omega_i^{\mathrm{ref}}$. However, it is ignored in phasor-domain models, and timescale separation cannot identify the issue, because the critical inner-loop states introduce phase lag and have small participation factors. These results highlight the need for the proposed certificate.
    In Fig.~\ref{fig:Robustconditon}(c),
    comparison among $\mu,
    \|\mathbf{TW}\|_{\infty}$ and $\|\mathbf{\widetilde{T}\widetilde{W}}\|_{\infty}$ also show our $\mu$ method can reduce the conservatism.

    \subsection{Validation of Robust-Stability Condition}
    In this subsection, we use the same network and device model, varying operating points and outer-loop control parameters, to validate the proposed robust-stability condition.

    To evaluate the relationship between our proposed condition and the true EMT stability boundary, we conduct a parameter sweep of GFM 3 in the range of $D_3\in[3,15]$ and $\alpha_3 \in [0.4, 2.0]$, where $\alpha_3$ corresponds to the inner-loop model by $(k_{\mathrm{vp},3},k_{\mathrm{vi},3},k_{\mathrm{ip},3},k_{\mathrm{ii},3}) = \alpha_3(0.1,0.1,1.5,39.27)$ and represents inner-loop bandwidth and EMT non-ideality. All other conditions are consistent with the default ones.

    \begin{figure}[h]
    \centering
    \includegraphics[width=0.48\textwidth]{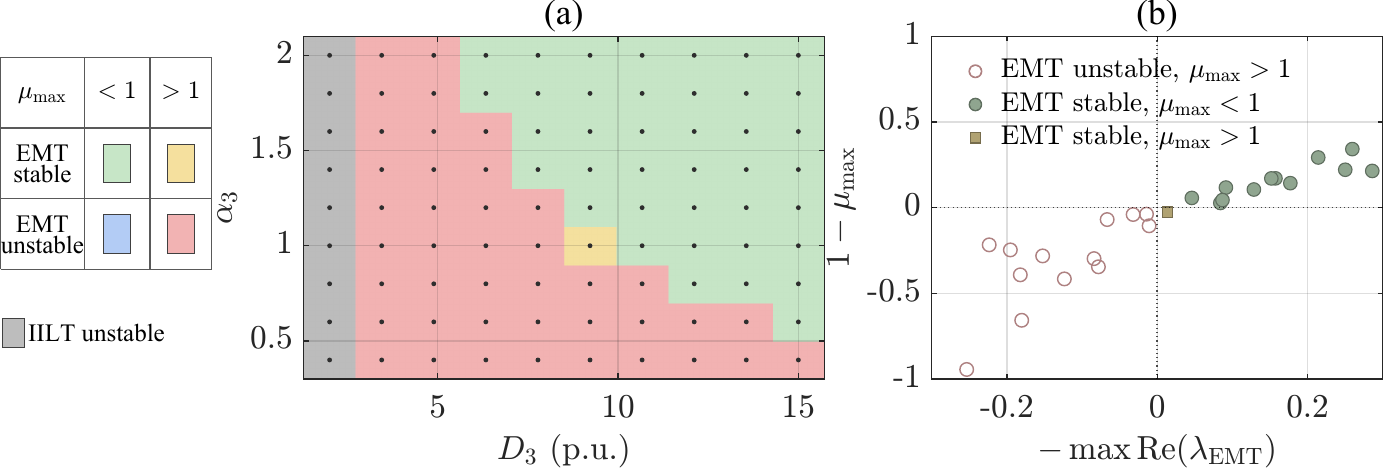}
    \caption{Multi-operating-point validation of the sufficient robust-stability certificate over the parameter space $(D_3,\alpha_3)$. (a) Classification of 100 operating points into five regions. (b) Scatter plot of $\mu_{\max}$ against the maximum real part of EMT eigenvalues (omit common-angle mode), confirming that $\mu_{\max}<1$ is sufficient but not necessary for EMT stability.}
    \label{fig:Multi-operating-point}
    \end{figure}

    The sweep results are presented in Fig.~\ref{fig:Multi-operating-point}. It confirms that $\mu_{\max}<1$ is a sufficient but not necessary condition for EMT stability: all IILT stable operating points certified by the criterion are EMT-stable, and a substantial yellow region exists where the system is EMT-stable yet $\mu_{\max}>1$, reflecting the inherent conservatism of the structured uncertainty bound. There is no condition of 'IILT stable, EMT unstable and $\mu_{\max}<1$', which certifies our robust-stability condition.

    \subsection{Validation of Practical Implementation}
    To further evaluate the practical applicability of the proposed measurement-to-certificate procedure, we consider a modified CIGRE medium-voltage distribution network with GFMs, as shown in Fig.~\ref{fig:CIGRE}(a). Load parameters are in
    \cite{Barsali2014BenchmarkS}
    and GFM are connected to bus 1,5,9,14, with active power output of GFM in bus 5,9,14 being $4~\text{MW}$ and terminal voltage $1~\text{p.u.}$, and the parameters are also consistent with Table.~\ref{tab:gfm_parameters} except $(M_i,D_i) = (3,14)~\text{p.u.}$ and $(k_{\mathrm{vp},i},k_{\mathrm{vi},i})=\alpha_i(0.1,0.1)$ with $\mathbf{\alpha}=(1, 0.8, 1.2, 1.5)$.

    \begin{figure}[h]
        \centering
        \includegraphics[width=0.48\textwidth]{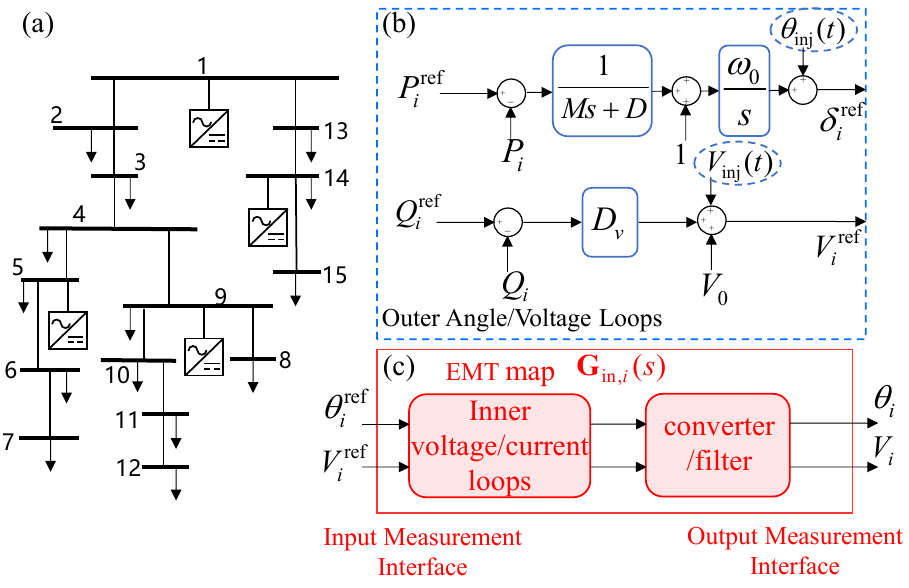}
        \caption{System configuration and converter-interface modeling. (a) CIGRE MV benchmark feeder with selected DER buses replaced by GFM converters. (b)-(c) Hierarchical representation of a representative GFM unit, including the outer angle/voltage loops and the converter-side EMT tracking dynamics from reference perturbations to terminal responses. The resulting terminal map $\mathbf{G}_{\mathrm{in},i}$ provides the basis for measurement-based uncertainty construction.}
        \label{fig:CIGRE}
        \end{figure}
    For each selected GFM unit, the converter-side terminal tracking dynamics are obtained from an SCIB measurement setup at the operating family $\Omega_\rho$ as $P^{\mathrm{ref}} \in [0, 2], R/X \in [0, 1.2], SCR \in [6, 15]$. The test signal injection and input and output selection are shown in Fig.~\ref{fig:CIGRE}(b) and (c).

    To assess whether validity can be evaluated without access to the inner-loop model, we assume that the inner-loop dynamics of all converters are unavailable and that the parameters are heterogeneous. The local tracking map $G_{\mathrm{in},i}(s)$ is accordingly identified via the measurement-based sweep procedure in Section \ref{sec: V-C}:
     The single-frequency sinusoidal signal is injected in the SCIB system with frozen outer-loop, and the frequency range is $\omega_{\text{scan}} \in [1, 10^{3.5}]~\mathrm{rad/s}$, which is sufficient for this system. The perturbation amplitude is sequentially injected and set to $A_{\text{inj}}= 0.05~\text{p.u.}$ for each channel,
     i.e. $\theta_{\text{inj}}(t) = A_{\text{inj}} \sin(\omega_{\text{scan}} t)$ and $V_{\text{inj}}(t) = A_{\text{inj}} \sin(\omega_{\text{scan}} t)$.

    \begin{figure}[h]
        \centering
        \includegraphics[width=0.9\columnwidth]{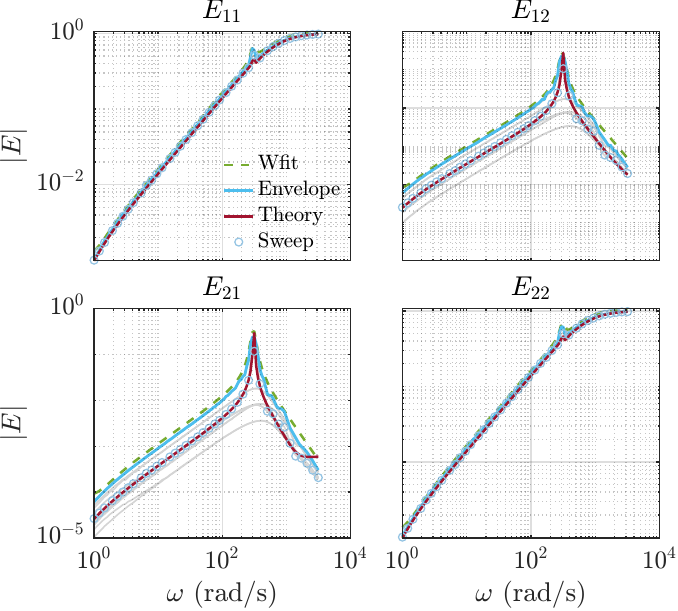}
        \caption{Validation of the measurement-based uncertainty construction for converter 1. Magnitude of the four entries of the EMT-induced mismatch, analytical result (Theory), frequency-sweep identification (Sweep), pointwise envelope (Envelope), and fitted uncertainty weight (Wfit $\mathbf{W}_1(s)$), and model mismatch $E$ at other operating conditions (grey). Close agreement between sweep and theory confirms the accuracy of the measurement-based method.}
        \label{fig:GB1_Gin_compare}
        \end{figure}
    \begin{figure}[h]
        \centering
        \includegraphics[width=\columnwidth]{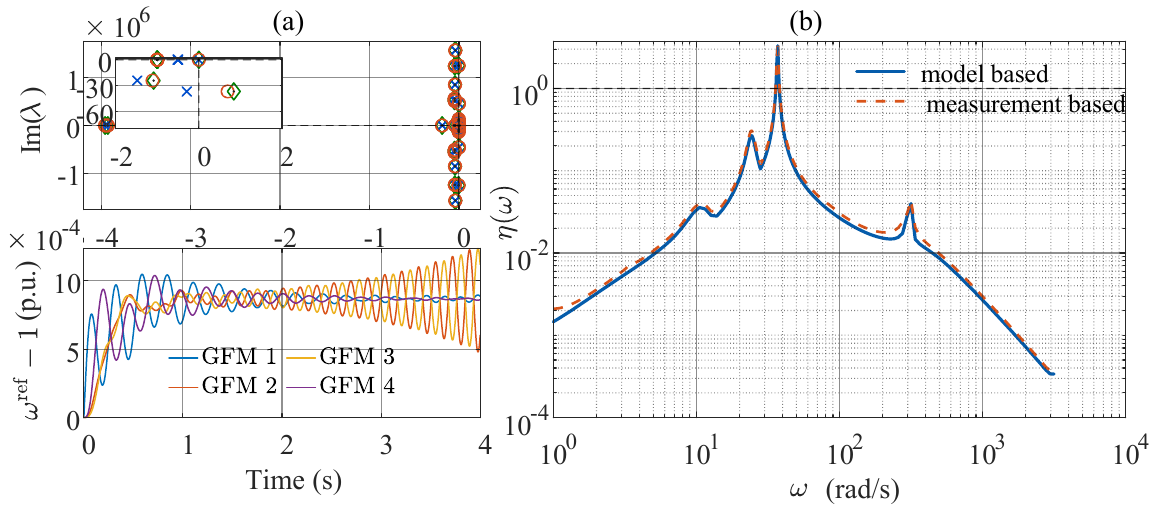}        \caption{(a) Eigenvalue distribution and time-domain frequency trajectories of the four GFM converters, with the same legend of eigenvalues as Fig.~\ref{fig:Poles+EMT}, showing that the IILT is stable while EMT is unstable. (b) Comparison between model-based and measurement-based $\eta(\omega)$ results. The measurement-based curve closely tracks the model-based result across the full analysis band and is marginally more conservative, thereby validating the practical deployability of the proposed framework without access to the inner-loop model.}
        \label{fig:Model_Meas_compare}
        \end{figure}

    Take the open-loop SCIB measurement-based construction of converter 1 as an example; Fig.~\ref{fig:GB1_Gin_compare} validates the measurement-based results in Section \ref{sec: PracticalImplementation} by comparing the analytically derived $\mathbf{G}_{\mathrm{in},1}(s)$ with its sweep-identified counterpart. The two descriptions exhibit consistent frequency-dependent behavior, with the analytically derived $\mathbf{G}_{\mathrm{in},1}(s)$ accurately capturing the dominant magnitude variation and key crossover features over the analysis band.

    In Fig.~\ref{fig:Model_Meas_compare}, the results of the interaction index calculation reveal that, in this case, the measurement-based method closely matches the model-based method and is more conservative, thereby validating deployability without access to inner-loop models or controller parameters.

    \section{Conclusion}\label{sec: Conclusion}

This paper examines the trustworthiness of phasor-domain GFM models for stability analysis in multi-converter microgrids.
The central finding is that converter EMT-induced mismatch, introduced by non-ideal inner-loop and filter dynamics, can propagate through network coupling, invalidating stability conclusions drawn from phasor-domain models.
To certify model validity, the EMT-induced mismatch at each converter terminal is embedded as a structured uncertainty in the IILT feedback loop, yielding a frequency-resolved interaction index $\eta(\omega)$ and a $\mu$-based robust-stability certificate. The resulting condition $\mu_{\max}<1$, unlike conventional approaches, does not require strict timescale separation or inner-loop parameter disclosure, and admits both model-based and measurement-based implementations.

The case studies show that an IILT model can predict a stable operating point, whereas the corresponding EMT model becomes unstable, emphasizing the need to \textbf{certify} the model rather than \textbf{assume} its validity. The proposed measure
localizes the frequency range where EMT uncertainty interacts most strongly with IILT dynamics, which is in alignment with modal analysis and simulation results, and the structured certificate is less conservative than an unstructured small-gain bound. The construction of measurement-based uncertainty weights further shows that the proposed framework can be implemented using terminal-accessible frequency-response data, with a level of conservatism appropriate for certification.

The framework yields a sufficient stability certificate: a violation indicates that the phasor-domain model fails certification under the specified uncertainty, rather than proving that the EMT system must be unstable. Future work will extend the uncertainty-weight construction to noisy measurements, online model-validity monitoring, and larger converter-dominated systems.

\appendices

\section{Realization of the EMT Block}
\label{app:ABCDE}
\begin{equation}
{\footnotesize
\begin{aligned}
y_i \triangleq {}&
\bigl[
\Delta i_{\mathrm{L}d,i},\,
\Delta i_{\mathrm{L}q,i},\,
\Delta V_{\mathrm{od},i},\,
\Delta V_{\mathrm{oq},i}, \\
&\quad
\Delta X_{\mathrm{v}d,i},\,
\Delta X_{\mathrm{v}q,i},\,
\Delta X_{\mathrm{i}d,i},\,
\Delta X_{\mathrm{i}q,i}
\bigr]^{\top}.
\end{aligned}
}
\label{eq:def_yi}
\end{equation}
\begin{equation}
{\footnotesize
\setlength{\arraycolsep}{2.5pt}
\renewcommand{\arraystretch}{0.95}
\mathcal{A}_{i,\rho}
=
\begin{bmatrix}
a_{11} & 0      & a_{13} & a_{14} & a_{15} & 0      & a_{17} & 0 \\
0      & a_{22} & a_{23} & a_{24} & 0      & a_{26} & 0      & a_{28} \\
a_{31} & 0      & a_{33} & a_{34} & 0      & 0      & 0      & 0 \\
0      & a_{42} & a_{43} & a_{44} & 0      & 0      & 0      & 0 \\
0      & 0      & -1     & 0      & 0      & 0      & 0      & 0 \\
0      & 0      & 0      & -1     & 0      & 0      & 0      & 0 \\
-1     & 0      & a_{73} & a_{74} & a_{75} & 0      & 0      & 0 \\
0      & -1     & a_{83} & a_{84} & 0      & a_{86} & 0      & 0
\end{bmatrix}
}
\label{eq:A_compact}
\end{equation}
{\footnotesize
\begin{align}
a_{11} &= a_{22} = -\tfrac{k_{\mathrm{ip},i}+r_{\mathrm{f},i}}{L_{\mathrm{f},i}}, &
a_{15} &= a_{26} = \tfrac{k_{\mathrm{ip},i}k_{\mathrm{vi},i}}{L_{\mathrm{f},i}}, &
\\
a_{17} &= a_{28} = \tfrac{k_{\mathrm{ii},i}}{L_{\mathrm{f},i}},&
a_{34} &= \omega_0+\tfrac{I_{\mathrm{o}q,i}^{*}}{C_{\mathrm{f},i}V_{\mathrm{o}d,i}^{*}},&
\\
a_{13} &= -\tfrac{k_{\mathrm{ip},i}\!\left(V_{\mathrm{o}d,i}^{*}k_{\mathrm{vp},i}+I_{\mathrm{o}d,i}^{*}\right)}
{L_{\mathrm{f},i}V_{\mathrm{o}d,i}^{*}}, &
a_{14} &= -\tfrac{k_{\mathrm{ip},i}\!\left(C_{\mathrm{f},i}V_{\mathrm{o}d,i}^{*}\omega_0+I_{\mathrm{o}q,i}^{*}\right)}
{L_{\mathrm{f},i}V_{\mathrm{o}d,i}^{*}},&
\\
a_{23} &= \tfrac{k_{\mathrm{ip},i}\!\left(C_{\mathrm{f},i}V_{\mathrm{o}d,i}^{*}\omega_0-I_{\mathrm{o}q,i}^{*}\right)}
{L_{\mathrm{f},i}V_{\mathrm{o}d,i}^{*}}, &
a_{24} &= \tfrac{k_{\mathrm{ip},i}\!\left(-V_{\mathrm{o}d,i}^{*}k_{\mathrm{vp},i}+I_{\mathrm{o}d,i}^{*}\right)}
{L_{\mathrm{f},i}V_{\mathrm{o}d,i}^{*}},&
\\
a_{31} &= a_{42} = \tfrac{1}{C_{\mathrm{f},i}}, &
a_{33} &= -a_{44} =  \tfrac{I_{\mathrm{o}d,i}^{*}}{C_{\mathrm{f},i}V_{\mathrm{o}d,i}^{*}}, &
\\
a_{43} &= -\omega_0+\tfrac{I_{\mathrm{o}q,i}^{*}}{C_{\mathrm{f},i}V_{\mathrm{o}d,i}^{*}}, & a_{75} &=a_{86} = k_{\mathrm{vi},i},  &
\\
a_{73} &= -k_{\mathrm{vp},i}-\tfrac{I_{\mathrm{o}d,i}^{*}}{V_{\mathrm{o}d,i}^{*}}, &
a_{74} &= -C_{\mathrm{f},i}\omega_0-\tfrac{I_{\mathrm{o}q,i}^{*}}{V_{\mathrm{o}d,i}^{*}}, &
\\
a_{83} &= C_{\mathrm{f},i}\omega_0-\tfrac{I_{\mathrm{o}q,i}^{*}}{V_{\mathrm{o}d,i}^{*}}, &
a_{84} &= -k_{\mathrm{vp},i}+\tfrac{I_{\mathrm{o}d,i}^{*}}{V_{\mathrm{o}d,i}^{*}}, &
\end{align}
}

\begin{equation}
{\footnotesize
\setlength{\arraycolsep}{2.2pt}
\renewcommand{\arraystretch}{0.92}
\mathcal{B}_{i,\rho}
=
\begin{bmatrix}
0 & b_{12} & b_{13} & 0 \\
0 & 0      & 0      & b_{24} \\
0 & 0      & b_{33} & 0 \\
0 & 0      & 0      & b_{44} \\
0 & 1      & 0      & 0 \\
0 & 0      & 0      & 0 \\
0 & b_{72} & b_{73} & 0 \\
0 & 0      & 0      & b_{84}
\end{bmatrix},
\quad
\mathcal{E}_{i,\rho}
=
\begin{bmatrix}
I_{\mathrm{o}q,i}^*  & 0 & 0 & 0 \\
-I_{\mathrm{o}d,i}^* & 0 & 0 & 0 \\
0                   & 0 & 0 & 0 \\
-V_{\mathrm{o}d,i}^{*} & 0 & 0 & 0 \\
0                   & 0 & 0 & 0 \\
0                   & 0 & 0 & 0 \\
0                   & 0 & 0 & 0 \\
0                   & 0 & 0 & 0
\end{bmatrix}
}
\label{eq:BE_matrices}
\end{equation}

{\footnotesize
\begin{align}
&b_{12}=\frac{k_{\mathrm{ip},i}k_{\mathrm{vp},i}}{L_{\mathrm{f},i}},\quad
b_{13}=-b_{24}=\frac{2k_{\mathrm{ip},i}}{3L_{\mathrm{f},i}V_{\mathrm{o}d,i}^{*}},\quad b_{72}=k_{\mathrm{vp},i}
\\
& b_{33}=-b_{44}=-\frac{2}{3C_{\mathrm{f},i}V_{\mathrm{o}d,i}^{*}} \notag,\quad
b_{73}=-b_{84}=\frac{2}{3V_{\mathrm{o}d,i}^{*}} .
\end{align}

\begin{equation}
\Delta\mathbf{x}_i
=
\mathcal{C}_{i,\rho}y_i+\mathcal{D}_{i,\rho}u_i ,
\label{eq:CD_def}
\end{equation}
where the only nonzero entries of
$\mathcal{C}_{i,\rho}$ and $\mathcal{D}_{i,\rho}$ are
\begin{equation}
{\footnotesize
\left[\mathcal{C}_{i,\rho}\right]_{1,4}
=
\tfrac{1}{V_{\mathrm{o}d,i}^{*}},
\qquad
\left[\mathcal{C}_{i,\rho}\right]_{2,3}
=1,
\qquad
\left[\mathcal{D}_{i,\rho}\right]_{1,1}=1 .
}
\label{eq:CD_nonzero}
\end{equation}
}

    \bibliographystyle{IEEEtran}
    \bibliography{bibtex/bib/IEEEabrv, Citations1}

    \ifCLASSOPTIONcaptionsoff
    \newpage
    \fi

\end{document}